\definecolor{linkcolor}{RGB}{109,71,106}
\definecolor{lam1}{HTML}{127DFF}
\definecolor{lam2}{HTML}{2980b9}
\definecolor{lam3}{HTML}{D04133}
\definecolor{lam4}{HTML}{1F61B2}
\definecolor{plot1}{RGB}{52, 101, 28}
\definecolor{plot2}{HTML}{4BAD95}
\newtheorem{thm}{Theorem}
\newtheorem{prop}[thm]{Proposition}
\newtheorem{lem}{Lemma}
\theoremstyle{def}
\theoremstyle{remark}
\newtheorem{ex}{Example}
\newcommand{\mylabel}[2]{#2.\def\@currentlabel{#2}\phantomsection\label{#1}}
\titleformat*{\subsection}{\bfseries\centering}
\titleformat*{\subsubsection}{\itshape}
\titleformat*{\paragraph}{\large\bfseries\centering}
\titleformat*{\subparagraph}{\large\bfseries\centering}
\titlespacing\section{0pt}{10pt plus 4pt minus 2pt}{4pt plus 0pt minus 2pt}
\titlespacing\subsection{0pt}{12pt plus 4pt minus 2pt}{2pt plus 0pt minus 2pt}
\titlespacing\subsubsection{0pt}{6pt plus 4pt minus 2pt}{2pt plus 0pt minus 2pt}
\renewenvironment{abstract}
 {\small
  \begin{center}
  \bfseries \abstractname\vspace{-.5em}\vspace{0pt}
  \end{center}
  \list{}{%
    \setlength{\leftmargin}{14mm}
    \setlength{\rightmargin}{\leftmargin}%
  }%
  \item\relax}
 {\endlist}
 \renewcommand{\blacksquare}{\vrule height7pt width4pt depth1pt}
\newcommand{\itm}[1]{%
\item[\textsc{#1.}]\protected@edef\@currentlabel{#1}%
}
\def \R{\mathbb{R}}
\def \W{\Omega}
\def \w{\omega}
\def \p{\mathscr{P}}
\def \P{\mathbb{P}}
\def \L{\mathcal{L}}
\def \F{\mathscr{F}}
\def \B{\textbf{B}}
\def \RB{\textup{\textbf{RB}}}
\def \2{\textup{\textbf{2W}}}
\def \P{\mathbb{P}}
\def \Q{\mathbb{Q}}
\renewcommand{\int}{\textup{int}}
\def \U{\mathscr{U}}
\def \>{\rangle}
\def\<{\langle}
\def \r{\textup{R}}
\def \rr{f}
\def \aa{f^A}
\newcommand{\ax}[1]{\textup{\textbf{#1}}}
\renewcommand{\phi}{\varphi}
\newenvironment{subproof}[2][\proofname]{%
  \begin{proof}[#1]%
}{%
  \end{proof}%
}
\newcommand{%
    \def\svgwidth{.8\columnwidth}
    \import{./}{.pdf_tex}
}[1]{%
    \def\svgwidth{.8\columnwidth}
    \import{./}{#1.pdf_tex}
}
\begin{document}

\onehalfspacing

\title{\vspace{-6ex}\textsc{Algebraic Semantics for Relative Truth, Awareness, and Possibility 
}}

\author{Evan Piermont\thanks{Royal Holloway, University of London, Department of Economics, \texttt{evan.piermont@rhul.ac.uk}}}

\maketitle

\begin{abstract}

This paper puts forth a class of algebraic structures, \emph{relativized Boolean algebras} (RBAs), that provide semantics for propositional logic in which truth/validity is only defined relative to a local domain. In particular, the join of an event and its complement need not be the top element. Nonetheless, behavior is locally governed by the laws of propositional logic. By further endowing these structures with operators---akin to the theory of modal Algebras---RBAs serve as models of modal logics in which truth is relative. In particular, modal RBAs provide semantics for various well known awareness logics and an alternative view of possibility semantics.
	
\bigskip
\noindent \emph{Key words: Relativized Boolean Algebras; Awareness Frames; Awareness Logics.}
\end{abstract}

\setlength{\abovedisplayskip}{3pt}
\setlength{\belowdisplayskip}{3pt}

\section{Introduction}

In many applications of logic, it is desirable that truth (and validity) is not defined globally, but relative to some local domain. Nonetheless, we may still want the logic to behave in a classical manor when examined locally, that is, when fixing the domain. For example:

\begin{itemize}
\item One may not want to discuss the properties of objects when they do not exist, or the necessity, knowledge, or obligation of statements when they are not defined. That is, different domains may represent differential states of existence. 
\item A special case of the above: one may want to consider agents who have differential awareness. Here, different domains represent the agents' different conceptions of what might exist.  In fact, this relative definition of truth has become commonplace in the epistemic formalization of (un)awareness, where an agent's reasoning is restricted by her awareness but is otherwise rational (hence classical) on her local domain of awareness \citep{heifetz2006interactive, li2009information}.
\item Another special case: when considering dynamic environments, one may want to discuss truth \emph{at a certain point in time}, relative to the current extant state of affairs. Here domains would be linearly ordered and indexed by time. 
\item One may want to consider truth in a system that is only partially determined, in other words, allowing some statements to be neither true nor false. Here, different domains represent different levels of resolution, and, requiring the determined statements to be logically consistent again requires locally classical behavior.
\item A special case of this: one may want to consider an agent who envisages, hence reasons about, vague worlds. Possibility semantics model partial resolution in this way, within a possible worlds framework \citep{humberstone1981worlds, benthem2016bimodal}.
\end{itemize}

This paper puts forth a class of algebraic structures, \emph{relativized Boolean algebras} (RBAs), that provide semantics for propositional logic in which truth is only defined relative to a local domain, but within a given domain behavior is classical. 
By further endowing these structures with operators---akin to the theory of modal algebras or Boolean algebras with operators---RBAs serve as models of modal logics in which truth is relative. In particular, RBAs can serve as model of differential existence (showcased by Example \ref{ex:RBA}), unawareness and knowledge under unawareness (Example \ref{ex:fk}) and partial resolution (Example \ref{ex:poss}). 

 Like a Boolean algebra and RBA is a set endowed with meet, join, and negation operations, and bottom and top elements: $\RB = \<RB,\land,\lor,\neg,0,1\>$. These operations satisfy the axioms of Boolean algebras except $X \lor \neg X$, which we denote by $1_X$, need not be the top element, and 0 need not be the identity for $\lor$. The elements of RBAs can be ordered via the usual condition $Y \geq X$ iff $X \land Y = X$.

In place of these Boolean axioms are the weakened versions $X \lor 1 = 1_X$ and $X \land 0 = 0$. Theorem \ref{thm:relative} shows that under these two relaxations $\pi_1(X) = \{ Z \mid 1_Z = 1_X\}$ is itself a Boolean algebra. Hence, if we think of $\pi_1(X)$ as the \emph{domain} on which the truth of $X$ is defined, then within a domain, truth behaves classically.
 
Without additional structure, these domains bear little relation to one another.  Of course, in the applications reference above, the various domains are related: becoming more aware or resolving some vagueness generally does not overturn all previously held truths. To restrict how truth in one domain relates to truth in another, we add an additional requirement to the definition of RBAs.
The property---that $1_Y \geq 1_X$  implies $\neg(Y \land 1_X)= \neg Y \land 1_X$---ensures that if $Y \geq X$ then $Z \mapsto Z \land 1_X$ is a Boolean homomorphism from $\pi_1(Y)$ to $\pi_1(X)$. Hence, RBAs are naturally equipped with an ordering on domains and a sense of projection between them.

\begin{ex}
\label{ex:RBA}
There are two planets, one whose atmosphere filters out all but blue light (the blue planet) and the other whose atmosphere filters all but red light (the red planet). The sentient inhabitants of each planet have been working hard to classify the plant life they see. The red taxonomists have concluded that all plants use one of two methods of photosynthesis: $X_R$ or $Y_R$. The blue taxonomists have, in contrast, observed three types of photosynthesis, $X_B$, $Y_B$, and $Z_B$.

Although the planets are distant in time and space,
it turns out that red plants, should they somehow be transported to the blue planet, would continue to photosynthesize as usual---that is, the pairs $(X_R, X_B)$, and $(Y_R,Y_B)$ refer to the same method of photosynthesis. $Z_B$ plants, however, require blue light specifically, and such a process does not and cannot exist on the red planet. 

Red taxonomists have concluded that ``if not $X_R$ then $Y_R$". While from a universal perspective, this statement is false, it feels unsatisfactory to call the red scientists wrong: within their local domain of their existence $Y_R$ is indeed the negation of $X_R$. However, for rather obvious reasons,\footnote{I.e., if we wish to maintain also ``$X_R$ if and only if $X_B$'' and ``$Y_R$ if and only if $Y_B$'' then ``not $X_R$ implies $Y_R$" implies the impossibility of $Z_B$.} the machinery of classical logic does not permit the red scientist's conclusion. 

RBA's offer a way of reconciling conflicting local and universal observations, by defining truth relative to a domain. To wit:  
Let $RB$ consist of the union of the elements of Boolean Algebras, $\B$ (for blue) and $\textup{\textbf{R}}$ (for red), generated by $\{X_B,Y_B,Z_B\}$ and $\{X_R,Y_R\}$, respectively. Moreover, define the Boolean homomorphism $h^B_R: \B \to  \textup{\textbf{R}}$ defined by $X_B \mapsto X_R$, $Y_B \mapsto Y_R$ and $Z_B \mapsto 0_R$. The operations on $\RB$, when restricted to either Boolean algebra,  coincide with the Boolean operations thereon. For $W_B \in \B$ and $W_R \in \textup{\textbf{R}}$, set $W_B \land W_R = h^B_R(W_B) \land W_R$, and $W_B \lor W_R = h^B_R(W_B) \lor W_R$. The top element is $1_B$ and the bottom is $0_R$. This algebra is visualized by Figure \ref{fig:ex_RBA}.\footnote{Although every RBA can be constructed as the disjoint union of Boolean algebras and homomorphisms between them, as above, these Boolean algebras do not need to be ordered (in the sense that the homomorphisms are surjective) as in Example \ref{ex:RBA}.}

$\RB$ models the situation in which $\neg X_R$ is indeed $Y_R$, but  $X_R \lor Y_R = 1_R \neq 1$. That is, where ``if not $X_R$ then $Y_R$" is valid in the particular sense that it is true whenever it is defined, but it is not defined universally. Likewise since $X_B$ projects to $Y_B$, they refer to the same event \emph{should both be defined}.  
\qed
\end{ex}

\begin{figure}   
    \centering 
    \begin{minipage}{.9\textwidth}
        \centering
    \def\svgwidth{.8\columnwidth}
    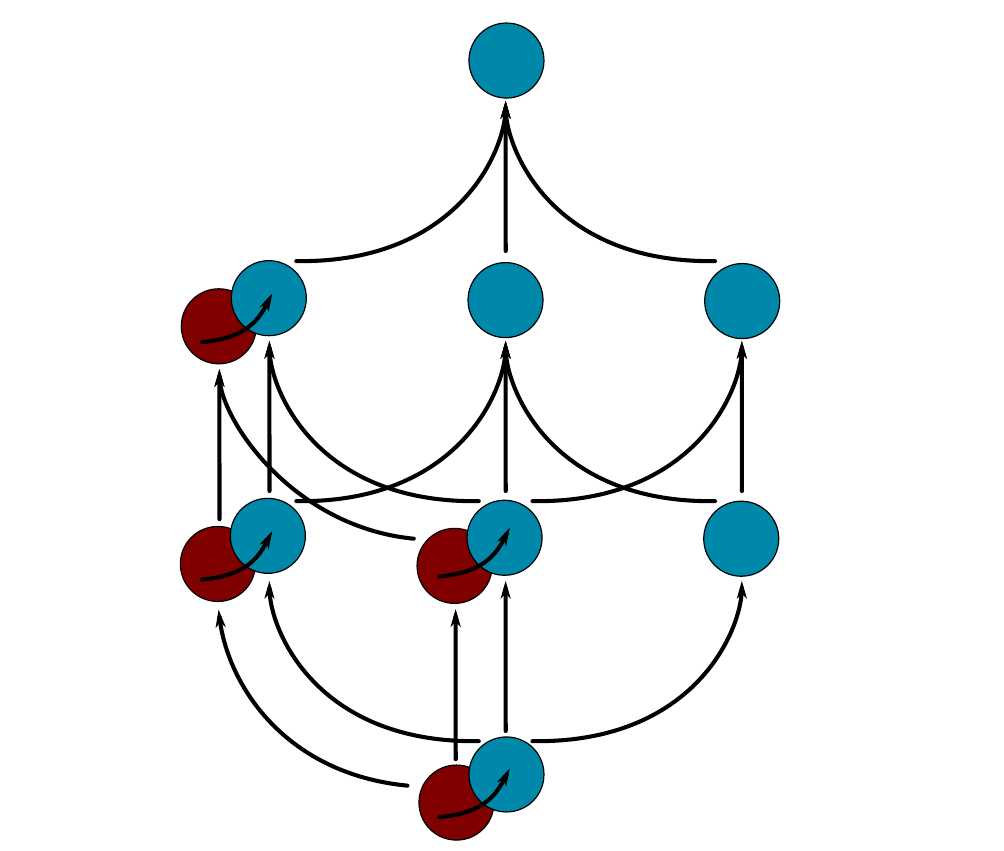

    \end{minipage}
    \caption{The RBA from Example \ref{ex:RBA}. The arrows indicate the partial ordering $\geq$. The blue elements compose $\B$, and the red elements, $\textup{\textbf{R}}$.}
    \label{fig:ex_RBA}
\end{figure}

Just as powersets serve as concrete examples of Boolean algebras,  given a set $W$, we can defined a \emph{concrete} RBA over  
$$\{(A,B) \mid B \subseteq W, A \subseteq B\}.$$
and with operations defined by

\begin{enumerate}[leftmargin=2cm]
\item[\mylabel{c}{\textsc{neg}}] $\neg(A,B) = ( B\setminus A, B)$
\item[\mylabel{c}{\textsc{meet}}] $(A,B) \land (A',B') = (A \cap A' , B \cap B')$
\item[\mylabel{c}{\textsc{join}}] $(A,B) \lor (A',B') = ((A \cup A') \cap (B \cap B'), B \cap B')$
\end{enumerate}

Theorem \ref{thm:embed} is a Stone-like representation theorem, showing that every RBA can be embedded into a concrete RBA. This inclusion, for the RBA considered in Example \ref{ex:RBA} is shown in Figure \ref{fig:ex_CRBA}.

As hinted at in Example \ref{ex:RBA}, RBAs serve as models of propositional logic in which truth and validity are relative by considering a homomorphism, $h: \L\to \RB$, between a propositional language, $\L$, and an RBA (i.e., a map such that $h(\neg\phi) = \neg h(\phi)$, $h(\phi \land \psi) = h(\phi) \land h(\psi)$, etc.) In concrete RBAs, the association between the formula $\phi$ and the event $(A,B)$ is intended to be thought of as specifying that $\phi$ is defined at $B$ and true at $A$. Hence the complement of $(A,B)$ is not $(A^c,B^c)$ but rather $(B\setminus A, B)$---the event where $\phi$ is defined but not true. A similar interpretation holds for the meets and joins. $\phi$ is valid in $\RB$ if $h(\phi) = (B,B)$; if $\phi$ is true wherever it is defined.

\begin{figure}[]
    \centering
    \def\svgwidth{.8\columnwidth}
    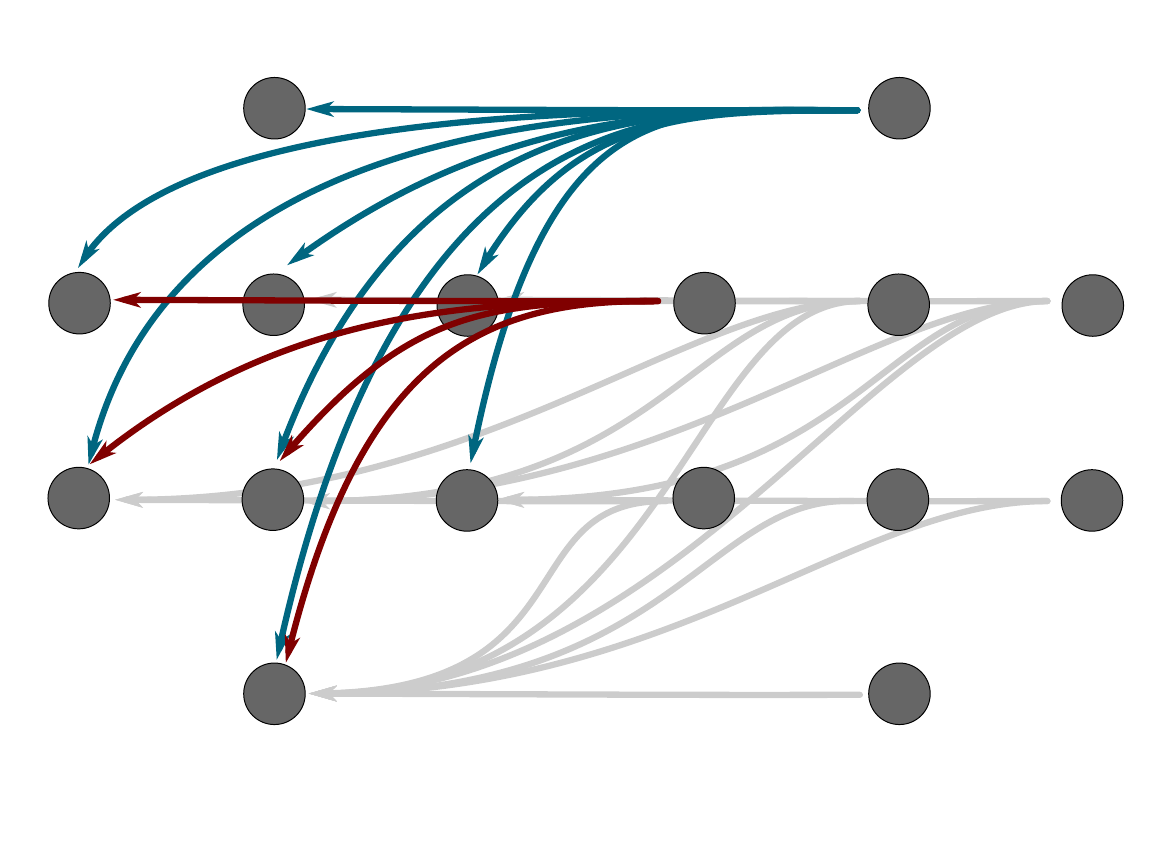

    \caption{The RBA from Example \ref{ex:RBA} as embedded into the CRBA generated by $W = \{x,y,z\}$. The blue arrows are the elements of $\B$ and the red the elements of $\textup{\textbf{R}}$.}
    \label{fig:ex_CRBA}
\end{figure}


\subsection{Modalities}


%
%
%
%
%

To accommodate reasoning about knowledge or other modalities, we can enrich an RBA, $\RB$, with an operator, a function $\rr: RB \to RB$.  To capture the standard properties of modalities in normal modal logics, we require that $\rr$ respects meets and maps the top element to itself.

When discussing knowledge and awareness, the interpretation is as in the state-space models: $\rr(X)$ is the element representing knowledge of the element $X$. As such, we also require $1_{\rr(X)} = 1_X$, that knowledge of an element must be defined in the same domain as the element itself. From $\rr$ we can define $\aa$, representing awareness, as $\aa(X) = \rr(1_X)$.
The definition of $\aa$, in addition to ensuring that awareness is domain specific, also embodies a weakened form of necessitation: the agent knows all (and only) tautologies she is aware of. 

\begin{figure}
    \centering
    \begin{minipage}{.49\textwidth}
    \scriptsize
    \def\svgwidth{.8\columnwidth}
    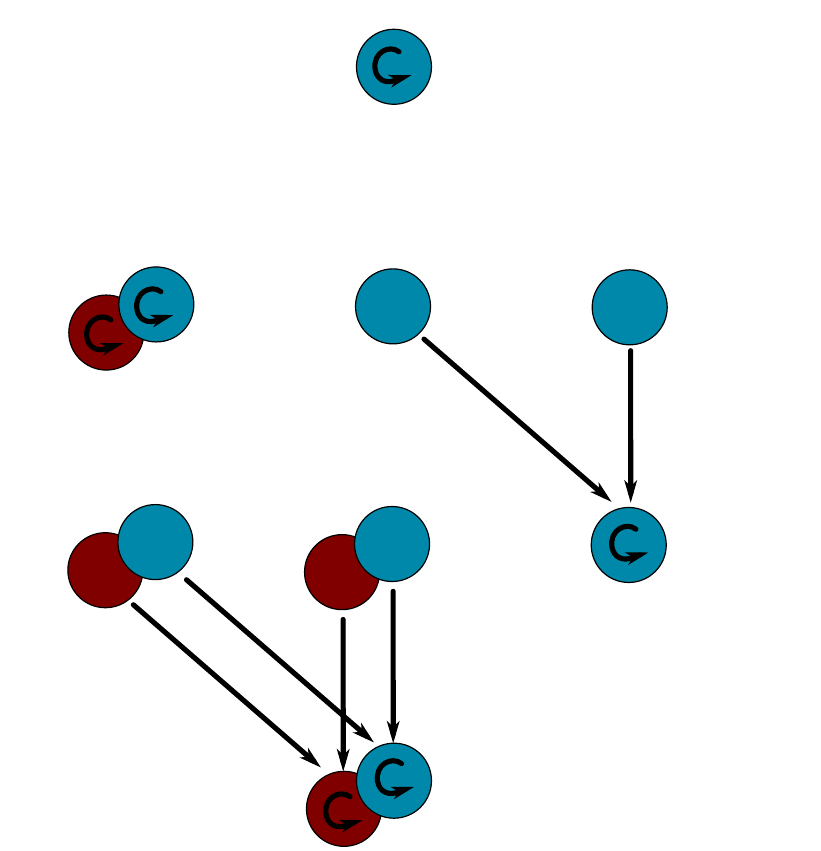

    \end{minipage}
    \hfill
        \begin{minipage}{.49\textwidth}
    \scriptsize
    \def\svgwidth{.8\columnwidth}
    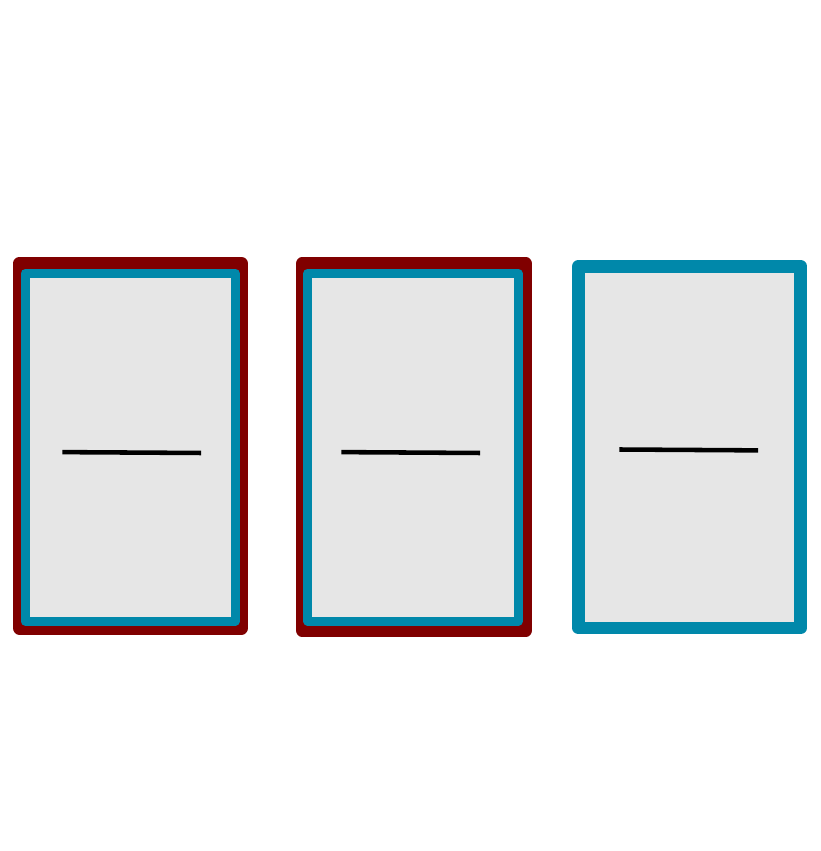

    \end{minipage}
    \caption{On the left, the RBA from Example \ref{ex:RBA} endowed with the operator $f^K$ from Example \ref{ex:fk}, as represented by the arrows. The right side shows a Kripke frame where the awareness sets are the languages generated by the propositions modeled and the accessibility relation, $R$, is partitional and given by the lines above the worlds. The (Boolean) algebra generated by the red worlds is $\textup{\textbf{R}}$, and by the blue, $\B$. The association of $\phi \mapsto \{\w | \w \models \phi\}$, produces the same model as in Example \ref{ex:RBA}, and $f^K$, from Example \ref{ex:fk}, then corresponds to $X \mapsto \{\w \mid R(\w) \subseteq X\}$.}
    \label{fig:ex_RBA_fk}
\end{figure}

\begin{ex}
\label{ex:fk}
We can reuse the RBA from Example \ref{ex:RBA} to capture an alternative story about awareness. 
Consider the proposition $p$ representing ``cryptographic protocol $x$ is insecure" and $q$ representing ``there is a quantum algorithm breaking protocol $x$." Associate $p$ to the event $X_B$ and $q$ to the event $X_R$. Then $\RB$ models the situation in which $p$ is always either true or false (since $\neg X_B = Y_B \lor Z_B$ so that $X_B \lor \neg X_B = 1)$ but $q$ is true or false only on the local domain where quantum computers exist (since  $\neg X_R = Y_R$ so that $X_B \lor \neg X_B = 1_R \neq 1)$.

Consider an agent whose awareness and knowledge are given by the operator $f^K: RB \to RB$ as given by $f^K(X_R) = f^K(Y_R) = 0_R$, 
$f^K(X_B) = f_K(Y_B) = 0_B$, $f^K(X_B\lor Z_B) = f_K(Y_B \lor Z_B) = Z_B$ and which coincides with the identity map everywhere else. This is visualized by the left side of Figure \ref{fig:ex_RBA_fk}. Then $f^A$ is simply the map $W \mapsto 1_W$.

Going back to our propositions $p$ and $q$, $f^K$ represents the epistemic state of affairs such that, if the agent is aware of quantum computers (i.e., is aware of $q$) then she is necessarily uncertain about the security of the protocol (i.e., does not know $p$). This is because the two elements resolving the truth of $p$ and representing awareness of $q$, namely, $X_R$ and $Y_R$, are known only at the bottom element. Conversely, if the agent who is unaware of $q$, she may be certain of $\neg p$; this is represented by $Z_B$.
\qed
\end{ex}

Propositions \ref{prop:comp} and \ref{prop:sound} show that modal RBAs are equivalent to awareness models (Kripke semantics for awareness logics) in exactly the same manner that modal algebras and Kripke frames are equivalent. For every modal RBA there is an awareness model that models the same theories and that constructed out of its ultrafilters. Conversely, for every awareness model there is a modal RBA constructed from the powerset of its worlds and that models the same theories. For example, the concrete RBA that embeds the RBA from Example \ref{ex:RBA}, itself visualized 
in Figure \ref{fig:ex_CRBA}, models the same theories as the Kripke frame shown on the right side of Figure \ref{fig:ex_RBA_fk}.


A shift in perspective shows that RBAs can capture other types of modal environments where truth is not absolute:

\begin{ex}
A runner is standing at the start line of the 800m race. Let $p$ denote the statement ``The runner is the winner of the 800m.'' It is reasonable to say that in the current state-of-affairs, $p$ is neither true nor false, but rather undetermined. This indeterminacy arises specifically because there is some further resolution of the state-of-affairs which resolves $p$ to be true and another that resolves it to be false. Now consider a spectator, who is sure of the current state-of-affairs and set $Kp$ as the statement ``The spectator knows $p$.'' Surely  $Kp$ is not true, but arguably, it is also philosophically sensible that it is not false either. That is, this is good reason to leave ``The spectator knows $p$'' undetermined just as $p$ is---the utility in not determining the truth assignment is that it distinguishes not knowing because of uncertainty about the state-of-affairs (the usual not knowing) with not knowing because the statement in question is undetermined and hence \emph{cannot be known}. Nonetheless, if the runner wins the race, so that indeterminacy resolves $p$ to be true, then the spectator will know $p$; likewise if $\neg p$ then $\neg K p$:  indeterminacy of the modal statement arises, just as before, because the state-of-affairs can be further determined towards $Kp$ or its negation.

This can be captured by the RBA consisting of the union of the elements of Boolean Algebras, $\B$, generated by the sets $\{X_B,\neg Y_B\}$, and $\textup{\textbf{R}}$, the trivial algebra $\{1_R, 0_R\}$. Let $1_R \land W_B= W_B$ for and $0_R \land W_B = 0_B$ for any $W_B \in \B$. Let $f^k$ be the identity map. 

By associating $p$ to the element $X_B$, the domain $\textup{\textbf{R}}$ represents the state-of-affairs where $p$ is not determined, and $\B$ the possible resolutions of this indeterminacy. In contrast to the case of differential existence / awareness, $p \lor \neg p$ \emph{is} determined even if $p$ is not, as its truth does not depend on how the indeterminacy of $p$ resolves. Thus, the state-of-affairs where $p$ is not determined is not specified by $\textup{\textbf{R}}$ alone but rather by the ways that $\textup{\textbf{R}}$ can can be extended to all of $\RB$: the event $1_R$ can be extended to a `consistent and complete set\footnote{Technically, we are extending ultra-filters of a domain to ultra-filters of $\RB$.} of elements' of $RB$ in two ways: $\{1_R,1_B,X_B\}$ and $\{1_R,1_B,\neg X_B\}$. Now $p$ is undetermined at $1_R$ because neither $X_B$ (the element associated to $p$) nor $\neg X_B$ (the element associated to $\neg p$) appear in every extension. $p \lor \neg p$, on the other hand, being associated to $1_B$ which resides in the intersection of all extensions, is determined to be true.  The same holds for various modal formulae.
\end{ex} 

Proposition \ref{prop:poss} shows that this `extension process,' lifting an element to the intersection of all complete and consistent extensions of it, reconstructs possibility semantics. That is, from an RBA, we can construct a persistent and refinable Kripke model with vague worlds ordered by their level of determinacy. Informally, the dictum of \emph{persistence and refinability} ensures that, as in the example, indeterminacy of a formula $\phi$ arises exactly when there are is a further resolution making $\phi$ true and another making it false.

\subsection{Projection} 
\label{sec:proj}

The extant models of semantic awareness \cite{heifetz2006interactive, li2009information} and possibility semantics \cite{humberstone1981worlds} all explicitly define their domains and the ordering over them (state-spaces and projections for awareness models; partial resolutions and a ordering relation for for possibility semantics). By contrast, for an RBA, the local domains and the order over them arise as derived objects from the algebraic relations. Thus, when interpreting an RBA as representing some logical system, the structure of the system (i.e., the domains and their order) and the logical relationships between the formulae themselves, arise from he same algebraic restrictions.

For a concrete benefit of this vantage, consider the class of algebraic structures that generalize RBAs so as to allow $\neg(Y \land 1_X) \neq \neg Y \land 1_X$ (for $Y \geq X$). It follows from the results below (Theorem \ref{thm:relative}) that for such structures, local domains still exist, and are still Boolean algebras, but the canonical projection maps might not exist. Thus, we can see that the \emph{structural} assumption in models of semantic awareness regarding the existence of projection maps equates to a \emph{logical} assumption about the distributivity of negation with the local $\land$-identity.

\section{Relativized Boolean Algebras}
\label{sec:alg}

\subsection{Preliminaries / Definitions}

Call $\bm{A} = \<A,\land,\lor,\neg,0,1\>$ an algebra \emph{of Boolean similarity type} when $A$ is a set, $0,1 \in A$, $\land$ and $\lor$ are binary operations taking $A\times A \to A$, referred to as the \emph{meet} and \emph{join}, receptively, and $\neg$ is a unary operation taking $A \to A$ referred to as the \emph{complement}. A \emph{homomorphism} $h: \bm{A} \to \bm{A}'$, is a function $h: A \to A'$ that maps $h(1) = 1'$ and that respects the operations (i.e., $h(X\land Y) = h(X) \land h(Y)$, etc).

If $\bm{A}$ is an algebra whose elements are partially ordered by $\geq$, then a \emph{filter}, $u$, on the algebra $\bm{A}$ is a subset of $A$ such that (i) $1 \in u$, (ii) if $X \in u$ and $Y \geq X$ then $Y \in u$ (i.e., $u$ is an $\geq$-upset) and (iii) if $X,Y \in u$ then $X \land Y \in u$  (i.e., $u$ is meet closed). A filter is called \emph{proper} if $u$ is a proper subset of $A$ and \emph{strongly} proper it does not contain $X\land \neg X$ for any $X \in A$.

An \emph{ultrafilter} is a filter that (iv) is strongly proper and there is no strongly proper filter, $v$, on $A$ such that $u$ is a proper subset of $v$. Let $\F(\bm{A})$ and $\U(\bm{A})$ denote the set of filters and ultrafilters on $\bm{A}$.\footnote{In Boolean algebras or other structures where $X \land \neg X = 0$ for all $X$, there is no distinction between proper and strongly proper filters.}

Of special importance is the class of Boolean algebras (whose elements are generically referred to as $\B$) that satisfy the axioms of Boolean algebras (see for example \cite{sep-boolalg-math}), written here for convenience:
\begin{enumerate}[leftmargin=2cm]
\item[\mylabel{ba1}{\textsc{ba1}}] $\land$ and $\lor$ are associative, communicative, and distributive. 
\item[\mylabel{ba2}{\textsc{ba2}}] $X \lor \neg X = 1$
\item[\mylabel{ba3}{\textsc{ba3}}] $X \land \neg X = 0$.
\item[\mylabel{ba4}{\textsc{ba4}}] $X \lor 0 = X \land 1 = X$.
\end{enumerate}
Let $\mathsf{BA}$ denote the class of Boolean Algebras. The operations induce a partial ordering on $B$ via $Y \geq X$ iff $X \lor Y = Y$ iff $X \land Y = X$. It is well known that condition (iv) in the definition of an ultrafilter is, for Boolean algebras, equivalent to: for all $X \in B$ either $X \in u$ or $\neg X \in u$, but not both.

\subsection{Axioms and Charecterization}
\label{sec:RBAax}

An algebra of Boolean similarity type,  $\RB = \<RB,\land,\lor,\neg,0,1\>$,  is a relativized Boolean algebra if it satisfies the laws below. To expedite their description, set the following notation $1_X \equiv X \lor \neg X$, $0_X \equiv X \land \neg X$, and $Y \geq X$ iff $X \land Y = X$. Let $\pi_2(\RB) = \{ 1_X \mid X \in RB\}$. For any $X \in RB$ let $\pi_1(X) = \{Z \in RB \mid 1_Z = 1_X\}$. 

For a garden variety Boolean algebra, $X \lor \neg X  = 1$ for all elements $X$; 1 is the unique the element that represents that which is determinately true. An RBA, intended to capture the notion of relative truth, comprises multiple local domains which each entertain their own local notion of truth. Specifically, the join operation is \emph{relative}, in the sense its identity, i.e., $0_X$, will depend on the domain of the elements on which it is acting.\footnote{\label{ft:ref1}Although $1$ is globally the identify for $\land$, the meet operation is also relative in the following sense: $X \land \neg X = 0_X$ where $0_X$ is not necessarily equal to $0$. The asymmetry between what is preserved under the meet and under the join arises from the fact that both operations move in the same $\geq$-direction across domains, while they move the in opposite $\geq$-directions within a domain.} 

$\pi_2(\RB)$ collects these domains, as indexed by their top elements. For each $1_X \in \pi_2(\RB)$, the corresponding domain is $\pi_1(1_X)$: the set of all elements $Z$ whose operational identities coincide with those of $X$, i.e., such that $Z \lor \neg Z = 1_X$. Thus, definitionally, for all $Z \in \pi_1(1_X)$, we have $\pi_1(Z) = \pi_1(1_X)$.\footnote{Of course RBAs generalize BAs. A Boolean algebra is a RBA for which $\pi_2(\RB)$ is a singleton and $\pi_1(X) = RB$ for all $X$.}

Theorem \ref{thm:relative} will establishes that the operations of an RBA, axiomatized below, will obey the laws of Boolean algebras locally, within the set of elements which have the same operational identities. That is: $\pi_1(X)$ forms a Boolean algebra. Moreover, the set these local Boolean structures themselves form a semi-lattice on which the projection maps are homomorphisms. 

The axioms of RBAs are:

\begin{enumerate}[leftmargin=2cm]
\item[\mylabel{rb1}{\textsc{rb1}}] $\land$ and $\lor$ are associative, communicative, and distributive and with $\neg$ satisfy DeMorgan's laws. 
\item[\mylabel{rb2}{\textsc{rb2}}]  $X \land X = X \lor X = X \land 1 = \neg\neg X = X$.
\item[\mylabel{rb3}{\textsc{rb3}}]  $X \lor 1 
= 1_X$.
\item[\mylabel{rb4}{\textsc{rb4}}]  $X \land 0 = 0$.
\end{enumerate}

Let $\mathsf{RBA}^\circ$ denote the class of weak relativized Boolean algebras---those structures that adhere to \eqref{rb1}---\eqref{rb4}. The class of relativized Boolean algebras, denoted $\mathsf{RBA}$, are the elements of $\mathsf{RBA}^\circ$ that also satisfy:
\begin{enumerate}[leftmargin=2cm]
\item[\mylabel{rb5}{\textsc{rb5}}]  $X \geq Y$  implies $\neg(X \land 1_Y)= \neg X \land 1_Y$.
\end{enumerate}

Notice, unlike in Boolean Algebras, $X \geq Y$ (i.e., that $Y \land X = Y$) need not be equivalent to $Y \lor X = X$. 
It is immediate from \eqref{rb2} and \eqref{rb4} that $1 \geq X \geq 0$ for all $X$. 

The axioms \hyperref[rb1]{\textsc{rb1}}--\hyperref[rb1]{\textsc{4}} find immediate counterparts in the rules of Boolean algebras.\footnote{Where \eqref{rb3} is a relativized version of the classical rule, and implies that the domain of a disjunction be lower (as ordered by $\geq$) than its constituent clauses: see Lemma \ref{lem:wo}\ref{wo:lat}.}  \eqref{rb5} require some motivation. If the various local domains bare no relation to each other, it would suffice to represent each independently; thus the value of RBAs lies in their ability to model connections between these domains of discourse, reconciling the local and global views. If $X \geq Y$ then $X \land 1_Y$ can be thought of as the representation of $X$ within the domain $\pi_1(Y)$---indeed it is the $\geq$-closest element to $X$ inside $\pi_1(Y)$.

\eqref{rb5} ensures that this map $X \to X \land 1_Y$ respects negation, and therefore (given the rest of the structure of RBAs) is a homomorphism. 
Thus, as will be established by Theorem \ref{thm:relative}, the $\pi(X)$-relative notion of truth can be \emph{projected} onto $\pi(Y)$ via these maps (which, further, compose with one another).

Building to this result, take first the following facts about RBAs


\begin{lem}
\label{lem:wo}
The following are true for all $\RB \in \mathsf{RBA}^\circ$.

\begin{enumerate}[label=(\roman*), leftmargin=2cm]
\item $\geq$ is a weak order. 
\item If $X \geq Y$ and $X' \geq Y$ then $X \land X' \geq Y$ and $X \lor X' \geq Y$. 
\item\label{wo:order} If $X \geq Y$ then $1_X \geq 1_Y$ and $X \land 0_Y = 0_Y$. 
\item\label{wo:in} If $1_X \geq 1_Y$ then $X \land 1_Y \in \pi_1(Y)$. 
\item\label{wo:lat} $1_X \land 1_Y = 1_{X\land Y} = 1_{X \lor Y} = 1_X \lor 1_Y$.
\end{enumerate}
\end{lem}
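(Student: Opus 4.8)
The plan is to handle the order-theoretic items directly and to funnel everything about the operational identities through part~\ref{wo:lat}, which carries the real content. Before anything else I would record the reformulation $1_Z=Z\lor 1$, which is just \eqref{rb3}, together with two quick consequences: $Z\land 1_Z=Z$ (every element lies below its own identity), since $Z\land(Z\lor 1)=(Z\land Z)\lor(Z\land 1)=Z$ by \eqref{rb1} and \eqref{rb2}; and $1_{1_Z}=1_Z$, by applying \eqref{rb3} twice. These justify reading $1_Z$ as a ``domain'' attached to $Z$.

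Parts (i) and (ii) are then routine. For (i) I would verify reflexivity ($X\land X=X$, \eqref{rb2}), antisymmetry (from commutativity of $\land$: $X\land Y=X$ and $Y\land X=Y$ force $X=Y$), and transitivity (from associativity: $X\land Y=X$, $Y\land Z=Y$ give $X\land Z=(X\land Y)\land Z=X\land(Y\land Z)=X\land Y=X$), so that $\geq$ is a partial order. Part (ii) is one line each: $Y\land(X\land X')=(Y\land X)\land X'=Y\land X'=Y$, and $Y\land(X\lor X')=(Y\land X)\lor(Y\land X')=Y\lor Y=Y$. The second clause of \ref{wo:order} is equally immediate: $X\geq Y$ means $X\land Y=Y$, so $X\land 0_Y=(X\land Y)\land\neg Y=Y\land\neg Y=0_Y$.

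The crux is \ref{wo:lat}. Using $1_Z=Z\lor 1$ and the two distributive laws of \eqref{rb1}, two of the four terms come for free:
\[ 1_{X\land Y}=(X\land Y)\lor 1=(X\lor 1)\land(Y\lor 1)=1_X\land 1_Y,\qquad 1_{X\lor Y}=(X\lor Y)\lor 1=1_X\lor 1_Y. \]
What remains---and what I expect to be the genuine obstacle---is the middle equality $1_X\land 1_Y=1_X\lor 1_Y$, i.e.\ that meet and join coincide on $\pi_2(\RB)$. This is precisely where absorption fails, so no purely lattice-theoretic manipulation closes it and \eqref{rb5} must enter. My plan is to negate: by De Morgan the claim is equivalent to $0_X\land 0_Y=0_X\lor 0_Y$. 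One side is free, $0_X\lor 0_Y=\neg(1_X\land 1_Y)=\neg 1_{X\land Y}=0_{X\land Y}$, using the first displayed identity. For the other, observe $1_{X\land Y}=1_X\land 1_Y\leq 1_X$, so \eqref{rb5} applies and gives $0_X\land 1_{X\land Y}=\neg(1_X\land 1_{X\land Y})=\neg 1_{X\land Y}=0_{X\land Y}$, and symmetrically for $Y$. Since $0_X\land 0_Y$ has identity $1_{0_X}\land 1_{0_Y}=1_X\land 1_Y=1_{X\land Y}$, it lies below $1_{X\land Y}$, whence $0_X\land 0_Y=(0_X\land 0_Y)\land 1_{X\land Y}=0_X\land 0_{X\land Y}=0_{X\land Y}$. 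Comparing the two sides yields $0_X\land 0_Y=0_X\lor 0_Y$, and negating returns the crux.

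Finally \ref{wo:order} (first clause) and \ref{wo:in} drop out of \ref{wo:lat}. If $X\geq Y$ then $X\land Y=Y$, so $1_Y=1_{X\land Y}=1_X\land 1_Y$, i.e.\ $1_X\geq 1_Y$. If $1_X\geq 1_Y$ then, using $1_{1_Y}=1_Y$, \ref{wo:lat} gives $1_{X\land 1_Y}=1_X\land 1_{1_Y}=1_X\land 1_Y=1_Y$, so $X\land 1_Y\in\pi_1(Y)$. The one point demanding care is the invocation of \eqref{rb5} inside the crux: I must secure the identity inequality $1_X\geq 1_{X\land Y}$ from the trivial $1_X\land 1_Y\leq 1_X$ (rather than from the as-yet-unproven \ref{wo:order}), and confirm $1_{0_X}=1_{X\land\neg X}=1_X\land 1_{\neg X}=1_X$, so that the domain bookkeeping in the last computation is sound.
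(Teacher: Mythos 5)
Your proof is correct, but your treatment of the one nontrivial step---the middle equality $1_X\land 1_Y=1_X\lor 1_Y$ in \ref{wo:lat}---diverges from the paper's, and your claim that \eqref{rb5} ``must enter'' there is mistaken. The paper closes this equality with a short direct computation that never touches \eqref{rb5}: write $1_X\lor 1_Y=(X\lor 1)\lor(Y\lor 1)=(X\lor Y)\lor 1=\neg(\neg X\land\neg Y)\lor 1$ by De Morgan, note that $\neg Z\lor 1=\neg Z\lor\neg\neg Z=Z\lor\neg Z=Z\lor 1$ for any $Z$ (so the outer negation can be dropped), and then distribute: $(\neg X\land\neg Y)\lor 1=(\neg X\lor 1)\land(\neg Y\lor 1)=(X\lor 1)\land(Y\lor 1)=1_X\land 1_Y$. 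So you are right that purely lattice-theoretic manipulation does not suffice, but De Morgan plus \eqref{rb3} already does; in the paper \eqref{rb5} plays no role in this lemma (it is reserved for the homomorphism claim in Theorem \ref{thm:relative}). Your detour through the bottom elements---showing $0_X\land 0_Y=0_{X\land Y}=0_X\lor 0_Y$ via \eqref{rb5} and then negating---is valid (the hypothesis $1_{1_X}=1_X\geq 1_{X\land Y}$ you need is secured exactly as you say), just longer and resting on a stronger axiom than necessary. The remaining items match the paper in substance: the paper declares (i)--(iii) immediate and computes (iv) directly as $(X\land 1_Y)\lor 1=1_X\land(1_Y\lor 1)=1_X\land 1_Y=1_Y$, whereas you derive the first clause of \ref{wo:order} and \ref{wo:in} as corollaries of \ref{wo:lat}; both routes are fine.
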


Lemma \ref{lem:wo} characterizes the relation between the domains via the ordering of elements: \ref{wo:in} show that our notion of projecting is well founded, and  \ref{wo:lat} shows that the domains inherent the lattice structure from the elements themselves. These results position us to show that these domains indeed capture relative, and via the projections, interconnected, notions of truth.  

\begin{thm}
\label{thm:relative}
Let $\RB \in \mathsf{RBA}^\circ$. Then
	\begin{enumerate}
	\item For each $X$, $\pi_1(X) = \{Z \in RB \mid 1_Z = 1_X\}$ is a Boolean Algebra (with $1_X$ and $0_X$ as the top and bottom elements, and the inherited operations).
	\item $\RB \in \mathsf{RBA}$ (i.e., satisfies \eqref{rb5}) if and only if for all $X \geq Y$ the map $h^X_Y: Z \mapsto Z \land 1_Y$ is a homomorphism from $\pi_1(X)$  to $\pi_1(Y)$. Moreover, in such cases, if $X \geq Y \geq Z$, then $h^X_Y \circ h^Y_Z = h^X_Z$.
	\end{enumerate}
\end{thm}

\begin{proof}
(1) That $\pi_1(X)$ is closed under $\neg, \land,$ and $\lor$ is immediate. That these relations satisfy \eqref{ba1} follows from \eqref{rb1}. Let $Y \in \pi_1(X)$. Then $Y\lor \neg Y = Y \lor 1 = X \lor 1 = X \lor \neg X = 1_X$. Likewise, $Y\land \neg Y = \neg(Y \lor \neg Y) = \neg(X \lor \neg X) = X \land \neg X = 0_X$. So \eqref{ba2} and \eqref{ba3} hold. Finally, consider $Y \land 1_X$ and  $Y \lor 0_X$. We have $Y \land 1_X = Y \land (Y\lor 1) = (Y\land Y) \lor (Y\land 1) = Y \lor Y = Y$ and also $Y \lor 0_X = Y \lor (Y\land \neg Y) = (Y\lor Y) \land (Y\lor \neg Y) = Y \land 1_X = Y$, indicating \eqref{ba4}.

(2) Assume \eqref{rb5} holds, and let $X \geq Y$ and consider the map $h^X_Y$. Lemma \ref{lem:wo}\ref{wo:in} states that the image of $h^X_Y$ is indeed $\pi_1(Y)$. Now let 
$Z,Z' \in \pi_1(X)$. We have $\neg h^X_Y(Z) =  \neg(Z \land 1_Y) = \neg Z \land 1_Y = h^X_Y(\neg Z)$ by \eqref{rb5}. We have $h^X_Y(Z\land Z') = (Z \land Z') \land 1_Y = (Z \land 1_Y) \land (Z' \land 1_Y) = h^X_Y(Z) \land h^X_Y(Z')$. The case for $\lor$ is identical. So $h^X_Y$ is a homomorphism.

Next, let $X \geq Y$ and assume $h^X_Y$ is a homomorphism. Then $\neg(X \land 1_Y) = \neg h^X_Y(X) = h^X_Y(\neg X) =  \neg X \land 1_Y$. So \eqref{rb5} holds.

Finally, let $X \geq Y \geq Z$ and $W \in \pi_1(X)$. Then $h^Y_Z(h^X_Y(X)) = (W \land 1_Y) \land 1_Z = W \land (1_Y \land 1_Z) = W \land 1_Z = h^X_Z(W)$, where the penultimate equality arises from the fact that $1_Y \geq 1_Z$, so $1_Y \land 1_Z = 1_Z$.
\end{proof}

\subsection{Concrete Relativized Boolean Algebras}

Just as the powerset of a set forms the prototypical example of a $\mathsf{BA}$, $\mathsf{RBA}$s can also be given concrete instantiations as (a subset of) a powerset. A \emph{concrete} $\mathsf{RBA}$ based on a set $W$ has elements which are of the form $(A,B)$ where both $A$ and $B$ are subsets of $W$ and $A \subseteq B$. The operations are relative versions of the usual powerset Boolean algebra operations: for example $\neg(A,B) = ( B\setminus A, B)$.

In this section we will establish a version of Stone's representation theorem for $\mathsf{RBA}$s, showing that each $\mathsf{RBA}$ can be embedded into a concrete relative Boolean algebra. 

If $W$ is a set, let $\2$ denote $\<\{(A,B) \mid B \subseteq W, A \subseteq B\}, \land,\lor,\neg,(\emptyset,\emptyset),(W,W)\>$ with the operations being defined as follows:

\begin{enumerate}[leftmargin=2cm]
\item[\mylabel{crb1}{\textsc{neg}}] $\neg(A,B) = ( B\setminus A, B)$
\item[\mylabel{crb1}{\textsc{meet}}] $(A,B) \land (A',B') = (A \cap A' , B \cap B')$
\item[\mylabel{crb1}{\textsc{join}}] $(A,B) \lor (A',B') = ((A \cup A') \cap (B \cap B'), B \cap B')$
\end{enumerate}

Let $\mathsf{CRBA}$ be the class of all such algebras.
It is easy to check that $\2$ is a relativized Boolean algebra, with $\pi_2(\2) = \{ (B,B) \mid B \subseteq W\} \cong \p(W)$ and $\pi_1(A,B) = \{(A',B) \mid  A' \subseteq B \} \cong \p(B)$ both being Boolean algebras.
Notice here that the ordering, $\geq$, is simply the product ordering on $\p(W)$: $(A,B) \geq (A',B')$ if and only if $A \supseteq A'$ and $B \supseteq B'$. Also notice that $1_{(A,B)} = (B,B)$ and $0_{(A,B)} = (\emptyset,B)$.

%
%
%

Theorem \ref{thm:embed}, below, shows that every $\RB \in \mathsf{RBA}$ can be embedded into the concrete $\mathsf{RBA}$ based on the set $F^\RB$. While the result is a reasonably straightforward generalization of Stone-line representation, it has some specific philosophical value in the present context. While is clearly possible to represent each local domain (being a classical Boolean algebra) as a concrete entity, it is much less clear that one could represent \emph{the connection between} various local notions of truth in a concrete way. Indeed, many extant models of local truth (e.g., \cite{humberstone1981worlds, heifetz2008canonical, halpern2013reasoning}) construct disjoint local state spaces and then explicitly and externally model the relation between them. By contrast, Theorem \ref{thm:embed} ensures it is always possible to represent via a single set, both the local domains and their relation to one another. 

Because $\pi_2(\RB)$ need not be a Boolean algebra, the filters we must work with are not ultrafilters. Towards this, if $u$ is a filter of $\RB$, then let $\pi_2(u) =  u \cap \pi_2(\RB)$ and $\pi_1(u,X) = u \cap \pi_1(X)$. With these definitions, we can 
can define the following filters on $\RB$, which may not be ultrafilters themselves but whose projections are either empty or are ultrafilters.
$$F^\RB = \{ u \in \F(\RB) \mid
\pi_1(u,X) \in \U(\pi_1(X)), \text{ for all } X \in \pi_2(u)\}.$$
Using these filters, we can construct our concrete representation. 

\begin{thm}
\label{thm:embed}
The map $h: \RB \to \textup{\textbf{2}}\p(F^\RB)$ defined by
$$X \mapsto (\{ u \in F^\RB \mid X \in u\}, \{ u \in F^\RB \mid 1_X \in u\})$$
is an injective homomorphism.
\end{thm}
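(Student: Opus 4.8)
The plan is to check, in order, that $h$ takes values in the concrete $\mathsf{RBA}$ based on $F^\RB$, that it commutes with $\neg,\land,\lor$ and fixes the top, and finally that it is injective. Write $\widehat X=\{u\in F^\RB\mid X\in u\}$, so that $h(X)=(\widehat X,\widehat{1_X})$. Two facts will be used repeatedly. First, since $X\land 1_X=X$ (this is computed inside the proof of Theorem \ref{thm:relative}), we have $X\le 1_X$, so upward closure of filters gives $\widehat X\subseteq\widehat{1_X}$; hence $h(X)$ really is an element of the concrete algebra. Second, for any $u\in F^\RB$ with $1_X\in u$, the set $\pi_1(u,X)=u\cap\pi_1(X)$ is by definition an ultrafilter of the Boolean algebra $\pi_1(X)$; since every element of $\pi_1(X)$ (in particular $X$ and $\neg X$) lies in $u$ iff it lies in $\pi_1(u,X)$, I may freely import the two standard facts about Boolean ultrafilters: exactly one of $Z,\neg Z$ belongs, and $Z\lor Z'$ belongs iff $Z$ or $Z'$ does.

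For the homomorphism identities I treat the two coordinates separately. Every second coordinate is governed by Lemma \ref{lem:wo}\ref{wo:lat} together with $1_{\neg X}=1_X$ and $1_1=1$: since $1_{X\land Y}=1_{X\lor Y}=1_X\land 1_Y$, and a filter contains $1_X\land 1_Y$ iff it contains both $1_X$ and $1_Y$, the second coordinates of $h(\neg X)$, $h(X\land Y)$, $h(X\lor Y)$ and of $h(1)$ come out as required. For the first coordinates, the top and the meet are routine ($X\land Y\le X,Y$ and filters are closed under $\land$), and negation follows from the second standing fact: if $1_X\notin u$ then neither $X$ nor $\neg X$ lies in $u$, while if $1_X\in u$ then exactly one does, which is precisely the description of $\widehat{1_X}\setminus\widehat X$.

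The one substantial point is the first coordinate of the join, i.e. the identity $\widehat{X\lor Y}=(\widehat X\cup\widehat Y)\cap\widehat{1_X}\cap\widehat{1_Y}$. If $1_X\notin u$ or $1_Y\notin u$ then $1_{X\lor Y}=1_X\land 1_Y\notin u$, so $X\lor Y\notin u$ and both sides exclude $u$. If instead $1_X,1_Y\in u$, then $1_{X\lor Y}\in u$, so $\pi_1(u,X\lor Y)$ is an ultrafilter of $\pi_1(X\lor Y)$. Here I would invoke Theorem \ref{thm:relative}: because $1_X,1_Y\ge 1_{X\lor Y}$, both $X$ and $Y$ lie in the domain of the homomorphism $h_{X\lor Y}\colon Z\mapsto Z\land 1_{X\lor Y}$, and applying it to $X\lor Y=(X\lor Y)\land 1_{X\lor Y}$ yields $X\lor Y=(X\land 1_{X\lor Y})\lor(Y\land 1_{X\lor Y})$ inside the Boolean algebra $\pi_1(X\lor Y)$. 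Primeness of $\pi_1(u,X\lor Y)$ then reduces membership of $X\lor Y$ to membership of one of the two joinands, and each joinand lies in $u$ exactly when $X$ (resp.\ $Y$) does, using $1_{X\lor Y}\in u$ for the backward direction. This is the step I expect to be the main obstacle, since it is the only place where the relativity of the algebra genuinely interferes with primeness and must be repaired through the projection homomorphism.

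For injectivity it suffices to separate any $X\ne Y$ by some $u\in F^\RB$, and I would reduce this to two separations, each discharged by Lemma \ref{lem:ultrafilterextension}. Noting that both coordinates of $h(1_X)$ equal $\widehat{1_X}$ (because $1_{1_X}=1_X$), the equality $h(X)=h(Y)$ forces $\widehat{1_X}=\widehat{1_Y}$ and $\widehat X=\widehat Y$. To separate distinct $1_X\ne 1_Y$: by antisymmetry of $\ge$ assume $1_X\not\le 1_Y$, so the principal filter $\{Z\mid Z\ge 1_X\}$ omits $1_Y$ and (as $0\le 1_Y$ rules out $1_X=0$) is strongly proper; Lemma \ref{lem:ultrafilterextension} extends it to $u\in F^\RB$ with $1_X\in u$ and $1_Y\notin u$, so $\widehat{1_X}\ne\widehat{1_Y}$. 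Once $1_X=1_Y=:e$, the elements $X,Y$ live in the single Boolean algebra $\pi_1(e)$; a separating Boolean ultrafilter $u'$ there has upward closure $\uparrow u'$ that is a filter of $\RB$ (using that $\land$ is monotone), and Lemma \ref{lem:ultrafilterextension} extends it to $u\in F^\RB$ distinguishing $X$ and $Y$, whence $\widehat X\ne\widehat Y$. The delicate sub-point throughout is verifying strong properness of these auxiliary filters, which I would settle by pushing any putative witness $W\land\neg W$ through the projection homomorphism $h_e$ of Theorem \ref{thm:relative}, where it becomes the Boolean bottom and forces a collapse.
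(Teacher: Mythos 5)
Your proposal is correct and follows essentially the same route as the paper: coordinate-wise verification of the homomorphism identities using Lemma \ref{lem:wo}\ref{wo:lat} for second coordinates and the ultrafilter property of $\pi_1(u,X)$ for first coordinates, with injectivity discharged by two applications of Lemma \ref{lem:ultrafilterextension}. If anything, your handling of the join is more careful than the paper's, which invokes primeness of the ultrafilter on $\pi_1(X \lor Y)$ directly for $X$ and $Y$ even though these need not lie in $\pi_1(X\lor Y)$; your detour through the projection homomorphism $h_{X\lor Y}$ and upward/meet closure of $u$ is exactly the repair that step needs.
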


\begin{proof}
Clearly $h(1) = (F^\RB,F^\RB)$. Take $h(X) = (A,B)$ and $h(Y) = (A',B')$. By \eqref{rb3}, $1_X = 1_{\neg X}$; we have $\{ u \in F^\RB \mid 1_{\neg X} \in u\} = B$. If $u \notin B$, then $\neg X \notin u$, and if $u \in B$ then $u \cap \pi_1(X)$ is an ultrafilter on $\pi_1(X)$. Hence for all $u \in B$, either $X \in u$ or $\neg X \in u$. This indicates that $h(X) = (B\setminus A,B)$ as desired.

Since a filter contains $X\land Y$ if and only if it contains $X$ and it contains $Y$, $\{ u \in F^\RB \mid X\land Y \in u\} = A \cap A'$. Further, by Lemma \ref{lem:wo}\ref{wo:lat}, $1_{X\land Y} = 1_X \land 1_Y$, so $\{ u \in F^\RB \mid 1_{X\land Y} \in u\} = B \cap B'$: $h(X\land Y) = (A \cap A', B\cap B')$.

Next, by Lemma \ref{lem:wo}\ref{wo:lat}, $1_{X\lor Y} = 1_X \land 1_Y$, so $\{ u \in F^\RB \mid 1_{X\lor Y} \in u\} = B \cap B'$, as well. Now: If $u \notin B\cap B'$, then $X\lor Y \notin u$, and if $u \in B$ then $u \cap \pi_1(X\lor Y)$ is an ultrafilter on $\pi_1(X\lor Y)$. It is well known that an ultrafilter on a Boolean algebra contains $X\lor Y$ if and only if it contains $X$ or it contains $Y$. This indicates that $h(X\lor Y) = ((A \cup A') \cap (B \cap B'), B \cap B')$, as desired.

Finally to see that $h$ is injective, assume $X \neq Y$. If $1_X \neq 1_Y$ then assume without loss of generality that $1_Y \not\geq 1_X$. So $\{1_Z \mid 1_Z \geq 1_X, Z \in RB\}$ is a strongly proper filter on $\RB$ and can, by Lemma \ref{lem:ultrafilterextension}, be extended to an element of $F^\RB$ that does not include $1_Y$. Thus $B \neq B'$.

If $1_X = 1_Y$, and (without loss if generality $Y \not\geq X$), then $\{Z \mid Z \geq X\}$ is a strongly proper filter on $\RB$ not containing $Y$. By Lemma \ref{lem:ultrafilterextension} again, we can extend to an element of $F^\RB$ that does not include $Y$. Thus $A \neq A'$. 
\end{proof}

\subsection{Modal $\RB$'s}
\label{sec:baos}

\renewcommand\rr{f}

As many of the interpretations of relative truth arise from a modal/relational structure, we introduce the a \emph{modal RBA}, a RBA equipped with an operator. Specifically, if $\RB \in \mathsf{RBA}$, then $(\RB,\rr)$ is a \emph{modal} relativized Boolean algebra, or MRBA, if 
$$\rr: RB \to RB$$ 
such that the following conditions hold
\begin{enumerate}[leftmargin=2cm]
\item[\mylabel{f1}{\textsc{f1}}]  $\rr(1) = 1$
\item[\mylabel{f2}{\textsc{f2}}]  $\rr(X \land Y) = \rr(X) \land \rr(Y)$.
\item[\mylabel{fD}{\textsc{fD}}]  $\rr(0_X) = 0_X$.
\end{enumerate}

Let $\mathsf{MRBA}$ denote the class of modal RBAs.

These conditions reflect the properties of normal modal logics: \eqref{f1} reflects our weakened form of necessitation: something which is tautological \emph{and always defined} is necessary / known; \eqref{f2} encodes the distributive property of normal modalities. As always, we have that \eqref{f2} implies that $\rr$ is monotone. Finally, \eqref{fD} instantiates the restriction to non-triviality as \ax{D} does in frame semantics---that not \emph{not everything} is known / necessary. 

\section{Models of Propositional Logic} 

For $\P$, a set of propositional variables, let $\L(\P)$ be the language defined by the grammar 
$$ \phi ::= p \mid 1 \mid \neg \phi \mid (\phi \land \phi) 
$$
where $p \in \P$. We employ the standard logical abbreviations: $0 \equiv \neg 1$, $(\phi \lor \psi) \equiv \neg(\neg \phi \land \neg \psi)$ and $(\phi \rightarrow \psi) \equiv (\neg \phi \lor \psi)$. 
For $\phi \in \L(\P)$ let $\P(\phi)$ collect those propositional variables which are subformula of $\phi$.

In an abuse of notation, we let $\L(\P)$ also denote the algebra of Boolean similarity type in which the base set is $\L(\P)$ itself and the meet, join, and complement operations and the top and bottom elements are denoted by their grammatical counterparts. This is the free algebra of Boolean similarity type generated by $\P$.

Say that $\phi \in \L(\P)$ is \emph{valid in $\RB$}, denoted $\RB \models \phi$,  if for all homomorphisms $h: \L(\P) \to \RB$ we have that $h(\phi) = 1_X$ for some $X \in RB$. Moreover say $\phi$ is \emph{valid in $\mathsf{RBA}$} (or just \emph{valid}), denoted $\mathsf{RBA} \models \phi$, if it is valid in $\RB$ for all $\RB \in \mathsf{RBA}$.

\begin{prop}
$\phi$ is a theorem of classical propositional logic if and only if $\mathsf{RBA} \models \phi$.
\end{prop} 

\begin{proof}
For $\RB \in \mathsf{RBA}$, let $h: \L(\P) \to \RB$ and let $h(\phi) = X$ for some classical validity, $\phi$. Note that for all subformula, $\psi$, of $\phi$, it must be that $1_{h(\psi)} \geq 1_X$ (this is the consequence of Lemma  \ref{lem:wo}).
%
Now, define the homomorphism $h': \L(\P) \to \pi_1(X)$ via:
$$h': p \mapsto 
\begin{cases}
(h(p) \land 1_X) \text{ if } 1_{h(p)} \geq 1_X \\
0_X \text{ otherwise.}
\end{cases}
$$
By Theorem \ref{thm:relative}, it must be that for all subformula, $\psi$, of $\phi$, that
$$h'(\psi) = (h(\psi) \land 1_X) = h^{h(\psi)}_X(h(\psi)) \in \pi_1(X).$$ 
Since $\phi$ is a theorem of classical logic and $h'$ is a homomorphism to $\pi_1(X) \in \mathsf{BA}$, it must be that $h'(\phi) = 1_X$, and hence $h(\phi) = 1_X$ as desired. 

Completeness follows from the fact that $\mathsf{BA} \subset \mathsf{RBA}$.
\end{proof}

\section{Awareness Semantics}

There are two interrelated methods of capturing awareness within a formal epistemology. First are the models that capture awareness semantically, where knowledge and awareness is understood in terms of the subsets of a set called a state-space \cite{dekel1998standard, heifetz2006interactive, li2009information}. Second are models that capture awareness syntacticly, where knowledge and awareness are understood in terms of statements about the world \citep{fagin1987belief, modica1994awareness, board2007object, halpern2009reasoning}.\footnote{There have also been several papers examining the connection/equivalence between extant models of the two approaches \citep{halpern2008interactive, heifetz2008canonical}.}

In state-space models, knowledge and awareness are represented by operators, $\rr$ and $\aa$ that map events (subsets of the state-space) to events. The event that an agent knows $E$ is $\rr(E)$; and that she is aware of event $E$ is $\aa(E)$. \citet{dekel1998standard} showed that under mild and intuitive conditions on these operators, the only possibility was being aware of everything or nothing. 
 
To circumvent this impossibility result, \citet*{heifetz2006interactive} (HMS) and \citet{li2009information} consider an ordered set of state-spaces. State-spaces higher in the ordering \emph{project} onto the lower spaces, in the sense that they are strictly more expressive. Then, roughly, an agent in state $\w$  is aware of those events which are in state-spaces lower in the ordering than the space containing $\w$. By considering multiple state-spaces, the definition of truth becomes inherently relative: 
there are states in which neither an event nor its complement obtain. Nonetheless, when restricted to events in a particular state-space, behavior is classical.

Syntactic models of awareness, conversely, are necessarily contingent on a logical language, $\L$, with two modalities $A$ and $K$, respectively. The truth of formulas is then modeled via Kripke frames/models where at each possible world, $\w \in W$, the agent is aware of a subset of the underlying logical language, $\mathcal A(\w) \subseteq \L$, and considers a subset of the worlds $\r(\w) \subseteq W$, possible. Often each state $\w$ in endowed with only a subset of the full language, $\L(\w)$, and $\phi \in \L$ is modeled as true of false only at those states where $\phi \in \L(\w)$ \cite{modica1999unawareness, halpern2013reasoning, halpern2019partial}. Again, truth is relatively defined: the worlds where $\neg \phi$ is true is the  relative complement of those worlds where $\phi$ is true---relative to the worlds where it is defined. Validity is likewise relative; $\phi$ is considered valid if it is true in all states \emph{where it is defined}. 

\subsection{Awareness Models} 

Let $\L^{A,K}(\P)$ denote the extension of $\L(\P)$ to include the modalities $A$ and $K$: $\L^{A,K}(\P)$ is defined by
$$ \phi ::= p \mid 1 \mid \neg \phi \mid (\phi \land \phi) \mid A\phi \mid K\phi .$$

An \emph{ordered frame} is a pre-ordered set $(W,\geq)$ endowed with a serial binary relation, $\r$. We will set $\r(\w) = \{\w' \in \W \mid \w\r \w'\}$.
Although we refer to the elements of $W$ as worlds or states, note they will not have the standard interpretation of specifying the truth of all formulas, but will rather model only a subset of the language. 

An \emph{awareness model} for the language $\L^{A,K}(\P)$ is an ordered frame, $(W,\geq,\r)$ along with two functions, $L: \P \to \p(W)$ and $V: \P \to \p(W)$ such (i) $L(p)$ is $\geq$ upwards closed, and (ii) $V(p) \subseteq L(p)$ for all $p \in \P$.  Abusing notation let $\L^{A,K}(\w) = \L^{A,K}(L(\w))$ specify the language at world $\w$. It is the content of (i) that if $\w \geq \w'$ then $\L^{A,K}(\w) \supseteq \L^{A,K}(\w')$.

An awareness model $M = (W,\geq,\r,L,V)$ defines, at every $\w \in W$
the truth of all formula in $\L^{A,K}(\w)$. 
Truth is defined recursively via the operator $\models$ as 

\begin{itemize}
\item $\<M, \w\> \models p$  
 iff $\w \in V(p)$,
 \item $\<M, \w\> \models \neg \phi$
  iff $\<M, \w\>  \not\models \phi$,
 \item $\<M, \w\> \models  \phi \land \psi$
  iff  $\<M, \w\> \models \phi$ and $\<M, \w\> \models \psi$,
  \item $\<M, \w\> \models A\phi$ 
  iff for all $\w' \in \r(\w)$, $\phi \in \L^{A,K}(\w')$,
 \item $\<M, \w\> \models K\phi$ 
 iff for all $\w' \in \r(\w)$, $\<M, \w'\> \models \phi$.
\end{itemize}

For a model $M$, let $V(\phi) = \{\w \in W \mid \<M, \w\> \models \phi\}$ collect the worlds in which $\phi$ holds, and $L(\phi) = \{\w \in W \mid \phi \in \L^{A,K}(\w)\}$ the worlds where $\phi$ is defined. The reuse of $V$ and $L$ is desired, as $V$ and $L$ are extensions of the functions in the definition of $M$.

Call $\phi$ \emph{valid} in $M$ if it is true everywhere it is defined: if $V(\phi) = L(\phi)$. Call $\phi$ valid in the class of awareness models, denoted $\mathsf{AM} \models \phi$, if it is valid in all $M$.

The awareness models considered here are slightly different than those proposed by Fagin and Halpern \cite{fagin1987belief}.
For propriety, a proof of equivalence is provided in the \hyperref[pf:thm1]{appendix}. In the Fagin and Halpern approach, the language is state-invariant and the awareness of an agent is given by an explicit set of statements $\mathcal{A}(\w) \subseteq \L$. Then $\<M, \w\> \models A\phi$ iff $\phi \in A(\w)$. As evidenced by Proposition \ref{prop:genaware}, the present model is equivalent to the Fagin-Halpern approach (under \ax{AGP}) and it yields two two benefits: (i) it makes the proofs in the next sections more straightforward, and (ii) it makes explicit the relation between $A$ to $K$, as both arise from the same relation, $\r$. In contrast to the purely syntactic approach of assigning to each state a set of formula about which the agent is aware, here, the agent's awareness, like her knowledge, arises from the possibilities she envisages. 
 
Notice that Necessitation (from $\phi$ infer $K \phi$) is not sound in $\mathsf{AM}$. Indeed, consider a model $M$ and some valid $\phi$ which is not in $\L^{A,K}(\w')$ for some $\w'$. Then if $\w' \in \r(\w)$ with $\phi \in \L^{A,K}(\w)$ we have that $\<M, \w\>\models \phi$ (since $\phi$ was valid and defined at $\w$, but not $\<M, \w\>\not\models K\phi$ (since $\phi$ is not defined at $\w'$, hence $\<M,\w'\> \not\models \phi$). Necessitation is sound and (along with the other axioms) complete within the class of frames where $\r(\w) \subseteq \{\w' \in W \mid \w' \geq \w\}$. However, this class of models is remarkably boring as evidenced by the validity of $\phi \rightarrow A\phi$.

\subsection{Modal $\RB$'s and Awareness}
\label{sec:baos}

\renewcommand\rr{f^k}

Let $\mathsf{MRBA^k}$ denote the class of modal RBAs, $(\RB,\rr)$, that satisfy the following additional restriction on the operator $\rr$:
\begin{enumerate}[leftmargin=2cm]
\item[\mylabel{fk}{\textsc{fk}}]  $1_{\rr(X)} = 1_X$, 
\end{enumerate}

This conditions reflect the property specific to knowledge and awareness in relation to the elements where they are defined: \eqref{fk} states that knowledge (and awareness) of an element is defined exactly when the event itself is defined---it is not possible to know or not know something which does not itself exist. From $\rr$ we can define the additional map $\aa: RB \to RB$ via $\aa: X \mapsto \rr(1_X)$.

If $(\RB,\rr)$ is a MRBA and $h: \L(\P) \to \RB$ is a homomorphism we can extend $h$ to $h^+: \L^{A,K}(\P) \to \RB$ via (inductively) $h^+(A\phi) = \aa(h^+(\phi))$ and $h^+(K\phi) = \rr(h^+(\phi))$. Then say that $\phi \in \L^{A,K}(\P)$ is \emph{valid} in $\mathsf{MRBA^k}$, denoted $\mathsf{MRBA^k} \models \phi$, if for all $(\RB,\rr) \in \mathsf{MRBA^k}$ and all homomorphisms from $h: \L(\P) \to \RB$ we have that $h^+(\phi) = 1_X$ for some $X \in RB$.

\begin{thm}
\label{thm:sandc}
$\mathsf{MRBA^k} \models \phi$ iff $\mathsf{AM} \models \phi$.
\end{thm}

The proof of Theorem \ref{thm:sandc} is the conjunction of the following two propositions. Proposition \ref{prop:comp} constructs, for each awareness model, a corresponding MRBA (and homomorphism) such that for each formula, $(V(\phi),L(\phi)) = h^+(\phi)$. Then, in converse fashion, Proposition \ref{prop:sound} constructs an awareness model, for each $(\RB,\rr,h)$, such that $h^+(\phi) = (V(\phi),L(\phi))$.  

\subsection{Powerset MRBAs}

If $F = (W,\geq,\r)$ is an ordered frame, define the concrete MRBA, $(\2, f^{K,R})$ and
 $$f^{K,R}: (A,B) \mapsto (\{\w \mid R^K(\w) \subseteq A\}\cap B, B)$$
 Verifying that $f^{K,R}$ here defined satisfies \eqref{f1}, \eqref{f2}, and \eqref{fk} is straight forward. \eqref{fD} follows from the assumption that $\r$ is serial. 

 \begin{prop}
 \label{prop:comp}
 Let $F = (W,\geq,\r)$ and $M = (F,L,V)$ be an awareness model and take $h^M:\L(\P) \to \2$ to be the homomorphism defined by $h^M(p) = (V(p),L(p))$ then 
 $h^{M+}(\phi) = (V(\phi), L(\phi))$.
 \end{prop}

 \begin{proof}
This is done by induction of the structure of formulae. The base case is the definition of $h^M$, and the steps for $\land$ and $\neg$ are immediate. We show the inductive steps for $K\phi$ and $A\phi$:
 \begin{align*}
 h^{M+}(K\phi) &= f^{K,R}(h^{M+}(\phi))\\
&= f^{K,R}(V(\phi), L(\phi))\\
&= (\{\w \mid R(\w) \subset V(\phi)\} \cap L(\phi), L(K\phi)) \\
&= (\{\w \mid \w' \in R(\w) \implies \<M,\w'\>\models \phi, \phi \in \L^{A,K}(\w)\}, L(K\phi)) \\
&= (V(K\phi), L(K\phi)).
 \end{align*}
The second equality is our inductive hypothesis. For awareness:
 \begin{align*}
 h^{M+}(A\phi) &= f^{A,R}(h^{M+}(\phi))\\
  &= f^{K,R}(h^{M+}(\phi \lor \neg\phi))\\
&= f^{K,R}(V(\phi \lor \neg\phi), L(\phi))\\
&= (\{\w \mid R(\w) \subset V(\phi \lor \neg\phi)\} \cap L(\phi), L(A\phi)) \\
&= (\{\w \mid \w' \in R(\w) \implies \<M,\w'\>\models \phi \lor \neg \phi, \phi \in \L^{A,K}(\w)\}, L(A\phi)) \\
&= (\{\w \mid \w' \in R(\w) \implies \phi \in \L^{A,K}(\w')\cap \L^{A,K}(\w)\}, L(A\phi)) \\
&= (V(A\phi), L(A\phi)).
 \end{align*}
 where again the third equality is our inductive hypothesis and the penultimate inequality from the fact that each state models the tautologies in its language, that is, $\<M,\w\>\models \phi \lor \neg \phi$ if and only if $\phi \in \L^{A,K}(\w)$.
 \end{proof}
 
Proposition \ref{prop:comp} proves that $\mathsf{MRBA^k} \models \phi$ implies $\mathsf{AM} \models \phi$. To see this, notice that if $\mathsf{MRBA^k} \models \phi$ then for every $(\RB,\rr)$, and for every homomorphism $h: \L(\P) \to \RB$, we have $h^+(\phi) = 1_X$ for some $X \in RB$. In particular, for each model $(F,L,V)$ this is true for $\2$ and $h^M: p \mapsto (V(p),L(p))$. Thus, Proposition \ref{prop:comp} requires that $V(\phi) = L(\phi)$. Since this holds for all models, we have that $\phi$ is valid in $\mathsf{AM}$.

\subsection{Ultrafilter Frames}

In dual fashion, the next Proposition shows that $\mathsf{AM} \models \phi$ implies $\mathsf{MRBA^k}\models \phi$ by constructing an awareness model for each $(\RB,\rr,h)$ that yield the same validities. As usual, the worlds will be sets of ultrafilter like objects. By the reasoning outlined in Section \ref{sec:alg}, we consider the filters
$$F^\RB = \{ u \in \F(\RB) \mid
\pi_1(u,X) \in \U(\pi_1(X)), \text{ for all } X \in \pi_2(u)\}.$$
 
Then, if $(\RB,\rr) \in \mathsf{MRBA^k}$, define the \emph{ultrafilter frame} as $(F^\RB,\geq^\RB,\r^\RB)$ where $u \geq^\RB v$ iff $\pi_2(u) \supseteq \pi_2(v)$ and $u\r^\RB v$ iff $\rr(X) \in u$ implies $X \in v$. 
 
\begin{prop}
\label{prop:sound}
Let $h: \L(\P) \to \RB$ be a homomorphism and $h^+$ its extension to $\L^{A,K}(\P) \to \2$. Let $M = (F^\RB,\geq^\RB,\r^\RB, L^h, V^h)$ where $L^h(p) = \{u \in F^\RB \mid 1_{h(p)}\in u\}$ and $V^h(p) = \{u \in F^\RB \mid h(p) \in u\}$. Then for all $\phi \in \L^{A,K}(\P)$, 
$$L^h(\phi) =  \{u \in F^\RB \mid 1_{h^+(\phi)} \in u\},$$
and
$$V^h(\phi) = \{u \in F^\RB \mid h^+(\phi) \in u\}.$$
 \end{prop}
 
\begin{proof}

As always, the proof is by induction on the structure of formula. This is straightforward  with the help of the following lemma:

%
%
  \begin{lem}
  \label{lem:ultra2}
 Let $u \in F^\RB$. Then $\r(u) \subseteq \{v \in F^\RB \mid X \in v\}$ iff $\rr(X) \in u$. 
 \end{lem}
 
 \begin{subproof}{$\bigstar$}
 The if direction is immediate given the definition of $\r$. We show the only if via its contrapositive: if $\rr(X) \notin u$ then there exists some $v \in F^\RB$ such that $\rr(X') \in u$ implies $X' \in v$ and $X \notin v$.
 
Define $v^- = \{ X' \in RB \mid \rr(X') \in u\}$. By assumption $X \notin v^-$. By \eqref{f1}, $1 \in v^-$, by \eqref{f2} $v^-$ is an upset and is closed under meets, hence $v^- \in \F(\RB)$. By \eqref{fD} $0_Y \notin v^-$ for any $Y \in RB$; so $v^-$ is strongly proper. Lemma \ref{lem:ultrafilterextension} allows us to extend $v^-$ to  $v \in F^\RB$ such that $\pi_2(v) = \pi_2(v^-)$ and with $X \notin v$.
 \end{subproof}

We first show that for all $\phi$, $L^h(\phi) = \{v \in F^\RB \mid 1_{h^+(\phi)} \in v\}$. This is by induction of the complexity of $\phi$. The base case is the definition of $L$. We will show the cases for $\land$ and $K$ (negation is trivial and the argument for $A$ is exactly the argument for $K$).

Let $u \in L^h(\phi\land \psi)$. So, $\phi \land \psi \in \L(u)$ iff $\phi \in \L(u)$ and $\psi \in \L(u)$. By the inductive hypothesis, this is iff $1_{h^+(\phi)} \in u$ and $1_{h^+(\psi)} \in u$. Since $u \in F^\RB$, this is iff $1_{h^+(\phi)} \land 1_{h^+(\psi)} = 1_{h^+(\phi) \land h^+(\psi)} = 1_{h^+(\phi \land \psi)} \in u$.

Let $u \in L^h(K\phi)$. So, $K\phi \in \L(u)$ iff $\phi \in \L(u)$. By the inductive hypothesis, this is iff $1_{h^+(\psi)} = 1_{\rr(h^+(\psi))} = 1_{h^+(K\psi)} \in u$. Where the second equality is via \eqref{f1}.

Next, we show that for all $\phi$, $V^h(\phi) = \{v \in F^\RB \mid h^+(\phi) \in v\}$. Again, this is by induction of the complexity of $\phi$, and again, we will just show the interesting steps: Let $u \in V^h(K\phi)$. So, $\<M,u\> \models K\phi$; iff for all $v \in \r(u)$, $\<M,v\> \models \phi$. By the inductive hypothesis, this is iff $\r(u) \subseteq \{v \in F^\RB \mid h^+(\phi) \in v\}$, which, by Lemma \ref{lem:ultra2} is iff $\rr(h^+(\phi)) = h^+(K\phi) \in u$.

Let $u \in V^h(A\phi)$. So, $\<M,u\> \models A\phi$; iff for all $v \in \r(u)$,  $v \in L^h(\phi)$. By the previous part of the proof (concerning $L^h$), this
is iff $\r(u) \subseteq \{v \in F^\RB \mid 1_{h^+(\phi)} \in v\}$, which, by Lemma \ref{lem:ultra2} is iff $\rr(1_{h^+(\phi)}) = \aa({h^+(\phi)}) = h^+(A\phi) \in u$.
\end{proof}

To see that Proposition \ref{prop:sound} shows that $\mathsf{AM} \models \phi$ implies $\mathsf{MRBA^k} \models \phi$ (and thus completes the proof of Theorem \ref{thm:sandc}), let $\phi$ be valid in $\mathsf{AM}$ and pick your favorite $\mathsf{MRBA^k}$, $(\RB,\rr)$ and homomorphism $h: \L(\P) \to \RB$. Then in particular, $\phi$ is valid in $M = (F^\RB,\geq^\RB,\r^\RB, L^h, V^h)$, meaning $V^h(\phi)=L^h(\phi)$. Proposition \ref{prop:sound} then indicates that 
$$\{u \in F^\RB \mid h^+(\phi) \in u\} = \{u \in F^\RB \mid 1_{h^+(\phi)} \in u\},$$
which can only be true if $h^+(\phi) = 1_{h^+(\phi)}$, indicating validity in $(\RB,h)$.

\section{Possibility Semantics}
\label{sec:poss}

The above discussion focused on the interpretation of local domains as representing differential \emph{existence} of propositions. That is, in different domains, different things exist, and truth is relative to this existence (for the case of an unaware agent, existence might refer to what exists in the states she considers possible in her internal representation of the world).  Another reason to have local domains of truth would be differential \emph{determinacy} of propositions: states can be vague so that not every proposition is determined. This was the motivation of Humberstone \cite{humberstone1981worlds}, for the invention of possibility semantics: essentially possible world semantics where each world is a \emph{partial} resolution in the sense that it may leave some propositions indeterminate. These worlds are partially ordered by vagueness so that worlds higher in the ordering refine those below. Formally:

Let $\L^{\square}(\P)$ denote the extension of $\L(\P)$ to includes the modality $\square$. A possibility model is an ordered frame $(W, \geq, \r)$ along with a \emph{partial} function $\mathcal{V}: \P \times W \to \{\textbf{T},\textbf{F}\}$ that jointly satisfy:

\begin{enumerate}[leftmargin=2cm]
\item[\mylabel{per}{\textsc{persistence}}] For all $p \in \P$ and $\w \in W$ if  $\mathcal{V}(p,\w)$ is defined then $\mathcal{V}(p,\w') = \mathcal{V}(p,\w)$ for all $\w' \geq \w$.
\item[\mylabel{ref}{\textsc{refinability}}] For all $p \in \P$ and $\w \in W$ if  $\mathcal{V}(p,\w)$ is undefined then there exists a $\nu\geq \w$ such that $\mathcal{V}(p,\nu) = \textbf{T}$ and a $\nu'\geq \w$ such that $\mathcal{V}(p,\nu') = \textbf{F}$.
\end{enumerate}
On these conditions Humberstone motivates 
\begin{quotation}
\small
Persistence is required because further delimitation of a possible state of affairs should not reverse truth values, but only reduce indeterminacies, and Refinability says that (for atomic formulae at least) such a reduction is possible in either of the relevant ways: this is a sort of 'principle of subdivision' for possibilities. 
\end{quotation}

Truth in a possibility model is defined recursively via the operator $\models$ as 

\begin{itemize}
\item $\<M, \w\> \models p$  
 iff $\mathcal{V}(p,\w) = \textbf{T}$,
 \item $\<M, \w\> \models  \phi \land \psi$
  iff  $\<M, \w\> \models \phi$ and $\<M, \w\> \models \psi$,
 \item $\<M, \w\> \models \neg \phi$
  iff for all $\w' \geq \w$, $\<M, \w'\>  \not\models \phi$,
 \item $\<M, \w\> \models \square\phi$ 
 iff for all $\w' \in \r(\w)$, $\<M, \w'\> \models \phi$.
\end{itemize}

In possibility model, the truth of a formula is relative to which atomic formula have been determined and so is not always defined. Notice that unlike the awareness models, where $p \lor \neg p$ is not defined when $p$ is not, in possibility models, the former is always determined (to be true) since it is independent of knowing the truth of $p$. Indeed, non-modal classical validities will be true at every state of a possibility model. 

Call a possibility model \emph{modally consistent} if for all $\phi \in \L^{\square}(\P)$, (i) if $\<M, \w\> \models \phi$ then for all $\w' \geq \w$, $\<M, \w'\> \models \phi$ and (ii) if $\<M, \w\> \not\models \phi$ and $\<M, \w\> \not\models \neg\phi$ then there exists some $\nu, \nu' \geq \w$ such that 
$\<M, \nu\> \models \phi$ and $\<M, \nu\> \models \neg\phi$. A modally consistent model is one which meets the analogs of persistence and refinability for all formulae. Clearly, persistence and refinability ensure these requirements are met for propositional formulae, and the semantics (in particular the rule for negation), ensures the structure propagates to all non-modal formulae. Modal formulae, on the other hand, need not generally conform---Humberstone ensures modal consistency by requiring joint conditions on $\geq$ and $\r$, namely  
\begin{enumerate}[leftmargin=2cm]
\item[\mylabel{perbox}{\textsc{p}}] For all $\w,\w',\nu \in W$ if $\w' \geq \w$ and $\w'\r\nu$ then $\w\r\nu$
\item[\mylabel{refbox}{\textsc{r}}] For $\w,\nu \in W$, if $\w\R\nu$ then for some $\w' \geq \w$, $\w''\r\nu$ for all $\w'' \geq \w'$.
\end{enumerate}

\begin{ex}
\label{ex:poss0}
Consider $\L^{\square}(\{p\})$ and let $W = \{\w_0,\w_{1a},\w_{1b}\}$ with $\w_{1a} \geq \w_0$, $\w_{1b} \geq \w_0$ and all the reflexive relations. Let $\r$ be the relation such that $\r(\w_{0}) = W$, $\r(\w_{1a}) = \w_{1a}$, $\r(\w_{1b}) = \w_{1b}$. Finally, let $V(p,\w_{1a}) = \textbf{T}$ and  $V(p,\w_{1b}) = \textbf{F}$. This model is modally consistent. In particular, $\square p$ is true at $\w_{1a}$ false at $\w_{1b}$ and undefined at $\w_0$. However, it does not satisfy \eqref{refbox} since $\w_0 \r \w_0$ but there is no $\w$ such that for all $\w' \geq \w$, $\w' \R \w_0$. 
\qed
\end{ex}

The next result shows that RBAs encode possibility semantics and, in particular, can be used to construct modally consistent models that do not necessarily adhere to \eqref{refbox} or \eqref{perbox}.

Let $(\RB,f^\square) \in \mathsf{MRBA}$ and $h: \L^{\square}(\P) \to \RB$ be a homomorphism. 
As defined earlier, let $F^\RB$ denote the set of filters on $\RB$ with empty or ultrafilter projections onto local domains. 
For each $u \in F^\RB$ define $[u] = \{U \in \U(\RB) \mid u \subset U\}$ as the set of all ultrafilters containing $u$. Then define $\geq^\dag$ via $u \geq^\dag v$ if $[u] \subseteq [v]$ and $\r^\dag$ as $u \r^\dag v$ iff $f^\square(X) \in \bigcap [u]$ implies $X \in \bigcap[v]$. Finally, let $\mathcal{V}^\dag: \P \times W^\dag  \to \{\textbf{T},\textbf{F}\}$ be the partial function defined by $\mathcal{V}^\dag(p, u) = \textbf{T}$ iff $h(p) \in \bigcap [u]$ and  $\mathcal{V}^\dag(p, u) = \textbf{F}$ iff $h(\phi) \notin \bigcup [u]$.

\begin{prop}
\label{prop:poss}
Let $(\RB,f^\square) \in \mathsf{MRBA}$ and $h: \L^{\square}(\P) \to \RB$ be a homomorphism. 
Then $M = ((F^\RB, \geq^\dag, \r^\dag), \mathcal{V}^\dag))$ is a modally consistent possibility model and, in particular, $\<M, \w\> \models \phi$ iff $h_+(\phi) \in \bigcap [u]$.
\end{prop}

\begin{proof}
 To see that $M$ satisfies persistence and refinability: let $u' \geq^\dag u$, so that $[u'] \subseteq [u]$. Then $V(p,u) = \mathbf{T}$ iff $h(p) \in \bigcap [u] \subseteq  \bigcap [u']$. So $V(p,u') = \mathbf{T}$. Likewise, $\mathcal{V}^\dag(p,u) = \mathbf{F}$ iff $h(p) \notin \bigcup [u] \supseteq  \bigcup [u']$. So $\mathcal{V}^\dag(p,u') = \mathbf{F}$. $M$ is persistent. Now assume that $\mathcal{V}^\dag(p,u)$ is undefined. Then there must exist some $V, V' \in [u]$ such that $h(p) \notin V$ and $h(p) \in V'$. Further, $V,V' \in F^\RB$ and, since they are ultrafilters, are $\geq^\dag$-maximal. Clearly, $V \geq^\dag u$ and $V' \geq^\dag u$. $M$ is refinable. 

We next show that $\<M, \w\> \models \phi$ iff $h_+(\phi) \in \bigcap [u]$, which will in turn prove that $M$ is modally consistent. We show this via induction, the base-case being the definition of $\mathcal{V}^\dag$. 
	\begin{enumerate}
	\item[($\land$)] $\<M, u\> \models \phi \land \phi$ iff $\<M, u\> \models \phi$ and $\<M, u\> \models \phi$ iff (by the inductive hypothesis) $\{h_+(\phi), h_+(\psi)\} \subset \bigcap [u]$ iff $h_+(\phi) \land h_+(\psi) = h_+(\phi \land \psi) \in \bigcap [u]$ (since all ultrafilters are $\land$-closed). 
	\item[($\neg$)] $\<M, u\> \models \neg \phi$ iff for all $u' \geq u$, $\<M, u'\>  \not\models \phi$, iff (by the inductive hypothesis) for any $v$ such that $[v] \subseteq [u]$,  $h_+(\phi) \notin \bigcap [v]$, iff for all $U \in [u]$, $h_+(\phi) \notin U$, iff for all $U \in [u]$, $\neg h_+(\phi) = h_+(\neg \phi) \in U$ (since ultrafilters on Boolean algebras\footnote{Ah, you say, $U$ is an ultrafilter on an RBA \emph{not} a BA! But, as such, its projection on each local domain, a Boolean algebra, is an ultrafilter, and $X$ and $\neg X$ live in the same domian.} contain either $X$ or $\neg X$) iff $h_+(\neg \phi) \in \bigcap [u]$.
	\item[($\square$)] $\<M,u\> \models \square \phi$ iff for all $v \in \r(u)$, $\<M,v\> \models \phi$ iff $\r(u) \subseteq \{v \in F^\RB \mid h^+(\phi) \in \bigcap[v]\}$ (by the inductive hypothesis) which, by (a very slight alteration of) Lemma \ref{lem:ultra2} is iff $f^\square(h^+(\phi)) = h^+(\square\phi) \in \bigcap[u]$.
	\end{enumerate}
Now, finally, modal consistency follows from a recreation of the first paragraph of the proof. First: let $u' \geq^\dag u$, so that $[u'] \subseteq [u]$. Then $\<M, u\> \models \phi$ iff $h(\phi) \in \bigcap [u] \subseteq  \bigcap [u']$. So $\<M, u'\> \models \phi$.  
Now assume $\<M, u\> \not\models \phi$ and $\<M, u\> \not\models \neg\phi$. From the first assumption, we have that there exists some $V \in [u]$ such that $h(p) \notin V$. From the second assumption we have that there exists some $v' \geq^\dag u$ such that $\<M, v\> \models \phi$ and hence some $V' \in [v'] \subseteq [u]$ such that $h(p) \in V'$. As above, this shows $M$ is (modally) refinable. \end{proof}

\setcounter{ex}{2}

\begin{ex}[continued]
\label{ex:poss}
Let $RB$ consist of the union of the elements of Boolean Algebras, $\B$ (for blue) and $\textup{\textbf{R}}$ (for red), where $\B$ is generated by the sets $\{X_B,\neg Y_B\}$ and $\R$ is the trivial algebra $\{1_R, 0_R\}$. Let $1_R \land W_B= W_B$ for and $0_R \land W_B = 0_B$ for any $W_B \in \B$. Let $f^\square$ be the identity map. Take $h_+: \L^{\square}(\{p\}) \to \RB$ to be given by $h_+(p) = X_B$.

Then, $F^\RB$ consists of three filters:
$$ F^\RB = \big\{ u_0 := \{1_R\}, u_{1a} := \{1_R, 1_B, X_B\}, u_{1b} := \{1_R, 1_B, \neg X_B\}\big\}$$
In addition, $\U(\RB) = \{u_{1a}, u_{1b}\}$, so 
$[u_0] = \{u_{1a}, u_{1b}\}$, $[u_{1a}] = u_{1a}$, and $[u_{1b}] = u_{1b}$, and $\bigcap[u_0] = \{1_R, 1_B\}$, $\bigcap[u_{1a}] = \{1_R, 1_B, X_B\}$, and $\bigcap[u_{1b}] = \{1_R, 1_B, \neg X_B\}$. Finally, we have  $\mathcal V(p,u_{1a}) = \textbf{T}$ and  $\mathcal V(p,u_{1b}) = \textbf{F}$, whereas $\mathcal V(p,u_0)$ is undefined.

As such $\geq^\dag$ is reflexive and $u_{1a} \geq^\dag u_0$ and $u_{1b} \geq^\dag u_0$. Moreover, $\r^\dag$ is given by $\r^\dag(u_{0}) = F^\RB$, $\r^\dag(u_{1a}) = u_{1a}$, $\r^\dag(u_{1b}) = u_{1b}$. Thus, the model $((F^\RB, \geq^\dag, \r^\dag), \mathcal{V}^\dag))$ is, upto isomorphism, the modally consistent but not refinable model from earlier in Example \ref{ex:poss0}. 
\qed
\end{ex}

\section{Validity}
\label{sec:valid}

When RBAs are seen as an interpretation of a logic, it is not generally the case that formulae are globally defined---there are elements where neither $\phi$ nor $\neg \phi$ hold. While this is, essentially, the central feature of RBAs---and the feature that positions it as well suited to model differential existence, awareness,  and resolution of vagueness---it requires that we consider a non-standard notion of validity. 

Although we are unjustified to require $p \lor \neg p$ to be true \emph{in a state-of-affairs in which does not exist}, it seems nonetheless reasonable from the external prospective (outside any particular state-of-affairs) that $p \lor \neg p$ is a valid statement. Thus, we are guided to the notion of validity used here: a formula is valid if it is true so long as it is defined. Note that this is equivalent to defining a formula as valid if it is never false.\footnote{In a bivalent framework, this is equivalent to the classical definition of validity.}  \citet{halpern2008interactive} refer to this notion as \emph{weak validity}, and the classic notion of `aways-true-ness' as \emph{strong validity}. 
Weak validity has kicked around in many forms in the literatures on non-classical logics: multi-valued logic, partial logic, quasi-truth, etc \cite{lemmon1966algebraic, setlur1970equivalence, priest1979logic, bueno1996concept, kachi2002validity}. Weak validity as applied to state-spaces models in the discussion of awareness first appeared\footnote{\citet{modica1999unawareness} entertain a model with two types of states, objective and subjective, and introduce a notion of validity where a formula must be true at all `objective'  states. Since objective states model all formula, this corresponds, at least for non-modal formulae, to classical validity.} in \citet{heifetz2008canonical} and has been widely used since \citep{halpern2013reasoning, halpern2019partial}. It is worth noting that \citet{halpern2008interactive} propose a different resolution to the problem, choosing to augment the language to include a non-standard notion of implication, $\hookrightarrow$ to capture implication by undefined statements. Then $\phi \hookrightarrow \psi$ may be defined even when $\phi$ and $\psi$ are not: the authors take validity of $\phi$ to be the universal truth of $\neg(\phi \hookrightarrow 0)$.

Another, more roundabout, inspiration for weak validity arises from shift in prospective given by Section \ref{sec:poss}, which represents a different notion of truth. There we take $\phi$ to be `true' at $X$ not if $h(\phi)$ is in the filter generated by $X$, but rather, if $h(\phi)$ is in every ultra-filter containing $X$. This suggests the alternative notion of validity for RBAs: $\phi$ is valid in $\RB$ iff $h(\phi)$ is contained in every ultra-filter on $\RB$. Of course, under mild conditions, this new validity is the old validity:

\begin{prop}
Let $\RB \in \mathsf{RBA}$ and $\phi \in \L(\P)$. Let $\pi_2(\RB)$ have overlapping domains: that is for any $1_{X_1} \in \pi_2(\RB)$ and $Y \neq 0_Y$ we have $Y \land 1_X \neq 0_Z$ for any $Z\in RB$. Then $\RB \models \phi$ iff $h(\phi) \in \bigcap \U(\RB)$ for all homomorphisms $h:  \L(\P) \to \RB$.
\end{prop}

\begin{proof}

It suffices to show that $\bigcap \U(\RB) = \pi_2(\RB)$. That  $\bigcap \U(\RB) \subseteq \pi_2(\RB)$ is obvious, since if $X \notin  \pi_2(\RB)$ then $\neg X \neq 0_X$ and so, standard arguments show we can construct an ultrafilter containing $\neg X$.

So let $F \in \F(\RB)$ be a filter that does not contain $1_X$. Consider $v^- = \bigcup \{Y \land 1_X \mid Y \in F\}$. Since $F$ is a filter, by the overlapping domains property, this does not contain any local bottom elements, and by Theorem \ref{thm:relative}(2), $v^-$ does not contain $Z$ and $\neg Z$ for any $Z \in RB$. Hence we can extend $v^-$ to a filter $v$: we have $F \subsetneq F \cup \{1_X\} \subseteq v$, so $F$ was not an ultrafilter. Therefore $\pi_2(\RB) \subseteq  \bigcap \U(\RB)$.
\end{proof}

Therefore, our new prospective yields only change of interpretation, leaving weak validity intact.

%

\newpage 
\appendix

\section{An example}

Let $RB$ consist of the union of the elements of Boolean Algebras, $\B$ (for blue) and $\textup{\textbf{R}}$ (for red), generated by the sets $\{Y_B,\neg Y_B\}$ and $\{X_R,\neg X_R\}$, respectively. Moreover, define the Boolean homomorphism $h^B_R: \B \to  \textup{\textbf{R}}$ defined by $Y_B \mapsto 1_R$. The operations on $\RB$, when restricted to either Boolean algebra,  coincide with the Boolean operations thereon. For $W_B \in \B$ and $W_R \in \textup{\textbf{R}}$, set $W_B \land W_R = h^B_R(W_B) \land W_R$, and $W_B \lor W_R = h^B_R(W_B) \lor W_R$. The top element is $1_B$ and the bottom is $0_R$. This algebra is visualized by Figure \ref{fig:ex_RBA_2}.

Notice that $h^B_R(\B) \neq \textup{\textbf{R}}$ but rather is the trivial $0-1$ algebra.

\begin{figure}   
    \centering 
    \begin{minipage}{.9\textwidth}
        \centering
    \def\svgwidth{.8\columnwidth}
    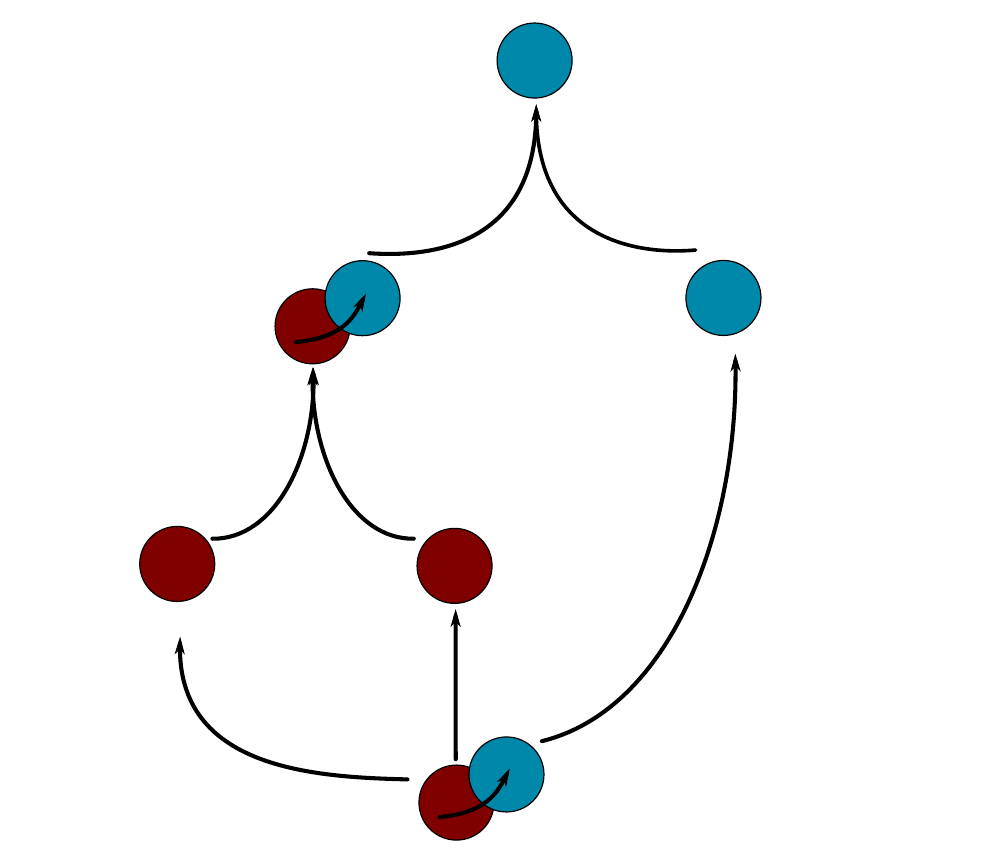

    \end{minipage}
    \caption{The RBA from Example 3. The arrows indicate the partial ordering $\geq$. The blue elements compose $\B$, and the red elements, $\textup{\textbf{R}}$.}
    \label{fig:ex_RBA_2}
\end{figure}

\section{Axiomatization of $\mathsf{AM}$}

Consider the following axioms over $\L^{A,K}(\P)$ all of which are standard, and whose merits and interpretations are discussed in the literature  cited in the introduction. 

\hspace{-4ex}
\begin{minipage}[t]{.5\textwidth}

\begin{centering}

{\bf Axioms:}

\end{centering}

\begin{description}

\item[{\textup K.}] $(K\phi\land K(\phi\rightarrow\psi))\rightarrow
K\psi$.

\item[{\textup D.}] $\neg K 0 $.

\item[{\textup T.}] $K\phi\rightarrow \phi$.

\item[{\textup 4.}] $K\phi\rightarrow KK\phi$.


\item[{\textup 5A.}] $(\neg K\phi \land A \phi) \rightarrow K\neg K\phi$.

\item[{\textup AGP.}] $A\phi \rightarrow
 A\psi$, for all $\psi \in \L^{A,K}(\P(\phi))$.%

\item[{\textup A0.}] $K\phi\rightarrow A\phi$.

\item[{\textup KA.}] $A \phi \leftrightarrow K A \phi$.

\end{description}
\end{minipage}
\begin{minipage}[t]{.5\textwidth}

\begin{centering}

\noindent {\bf Rules of Inference:}

\end{centering}

\begin{description}  
\item[{\textup MP.}] {F}rom $\phi$ and $\phi\rightarrow\psi$ infer $\psi$
(modus ponens).

\item[{\textup Sub.}] From $\phi$ infer all of its substitution instances.

\item[{\textup Nec$_{K}$.}] {F}rom $\phi$ infer $K \phi$.

\item[{\textup Nec$_{AK}$.}] {F}rom $\phi$ infer $A\phi \rightarrow K \phi$.

\end{description}
\end{minipage}

\bigskip

Let $\textbf{AX}$ denote the smallest logic containing the tautologies of propositional logic and $\ax{AGP}\cup\ax{K} \cup \ax{D}\cup \ax{A0}$ and which is closed under \ax{MP}, \ax{Sub}, and \ax{Nec$_{AK}$}. $\textbf{AX}$ is the axiom system considered in \cite{fagin1987belief} when awareness is generated by primitive propositions and when the accessibility relation is serial.\footnote{Our inclusion of \ax{D} will, as usual, specify those models where the accessibility relation is serial. There is no intrinsic problem considering a weaker logic without \ax{D} (and therefore without any restriction on the accessibility  relation), but to obtain a complete and sound axiomatization, we must replace it with a novel axiom: $K0 \rightarrow A\phi$.}


\begin{prop}
\label{prop:genaware}
$\phi$ is a theorem of \ax{AX} iff $\mathsf{AM} \models \phi$.
\end{prop}

\label{pf:thm1}
\begin{subproof}{}
Soundness is straightforward. To show completeness, we follows the usual conical construction with a slight caveat. For the frame, let $W^c$ denote the set of all pairs $(\Gamma,\Q)$ where $\Gamma$ is a maximally consistent set of formula containing \textbf{AX}
in the language $\L^{A,K}(\Q)$ and $\Q \subseteq \P$. Order $W^c$ via $(\Gamma,\Q) \geq^c (\Gamma',\Q')$ iff $\Q \supseteq \mathbb{Q'}$. Construct the relations according to $(\Gamma,\Q)\r^{c}(\Gamma',\Q')$ iff $\{\phi \in \L^{A,K}(\P) \mid K\phi \in \Gamma\} \subseteq \Gamma'$ and $\{\phi \in \L^{A,K}(\P) \mid A\phi \in \Gamma\} \subseteq \L^{A,K}(\Q')$. Then to construct the canonical model, set $L^c(p) = \{ (\Gamma,\Q) \in \W^c \mid p \in \Q\}$ and $V^c(p) = \{ (\Gamma,\Q) \in \W^c \mid p \in \Gamma\}$.
An induction on the complexity of $\phi$ shows that, for all $(\Gamma,\Q)$, $\phi \in \Gamma$ iff $\<M^c, (\Gamma,\Q)\> \models \phi$. The only nontrivial steps, for $A\phi$ and $K\phi$, are direct consequences of the following lemmas:

\begin{lem}
\label{lem:aextend}
Fix $(\Gamma,\Q)$. If $A\phi \notin \Gamma$ then there exists a $(\Gamma',\Q')$ such that $(\Gamma,\Q)\r (\Gamma',\Q')$ and $\phi \notin \Q'$. 
\end{lem}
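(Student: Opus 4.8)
The plan is to read this as the existence lemma for the awareness modality $A$ in the canonical model, and to produce the witness world by hand. Since $A\phi \notin \Gamma$, maximality gives $\neg A\phi \in \Gamma$, and I want an $\r^c$-successor $(\Gamma',\Q')$ at which $\phi$ is \emph{undefined}, i.e.\ $\P(\phi) \not\subseteq \Q'$. The natural choice for the second coordinate is the set of primitives the agent is aware of, $\Q' = \{p \in \Q \mid Ap \in \Gamma\}$. For the first coordinate I would extend the ``known'' set $\Gamma_K := \{\psi \mid K\psi \in \Gamma\}$, by a Lindenbaum argument carried out inside the restricted language $\L^{A,K}(\Q')$, to a maximal consistent $\Gamma'$ containing it.

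First I would check that $(\Gamma',\Q')$ is a genuine world and an $\r^c$-successor. The awareness clause of $\r^c$ asks that $\Gamma_A := \{\psi \mid A\psi \in \Gamma\} \subseteq \L^{A,K}(\Q')$, and this is exactly where \ax{AGP} is used: if $A\psi \in \Gamma$ and $p \in \P(\psi)$, then $p \in \L^{A,K}(\P(\psi))$, so \ax{AGP} yields $A\psi \to Ap$, whence $Ap \in \Gamma$ and $p \in \Q'$; thus $\P(\psi) \subseteq \Q'$ and $\psi \in \L^{A,K}(\Q')$. The knowledge clause asks $\Gamma_K \subseteq \Gamma'$, which holds by construction once I know $\Gamma_K$ is a consistent subset of $\L^{A,K}(\Q')$: containment in the sublanguage follows from \ax{A0} (from $K\psi \in \Gamma$ get $A\psi \in \Gamma$, hence $\P(\psi) \subseteq \Q'$ by the previous step), and consistency is the same fact that makes $\r^c$ serial, secured by \ax{D} together with the distribution axiom \ax{K}.

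The hard part---and the only step that genuinely invokes the ``awareness generated by primitives'' hypothesis---is to verify that this $\Q'$ really omits a primitive of $\phi$. I would isolate the characterization $A\phi \in \Gamma \iff \P(\phi) \subseteq \Q'$. The $(\Rightarrow)$ direction is the \ax{AGP} computation already used. The $(\Leftarrow)$ direction is the crux: from $Ap \in \Gamma$ for every $p \in \P(\phi)$ I must conclude $A\phi \in \Gamma$. By \ax{AGP} it is enough to find one formula $\theta$ with $A\theta \in \Gamma$ and $\P(\phi) \subseteq \P(\theta)$, the natural candidate being $\theta = \bigwedge_{p \in \P(\phi)} p$; so the real content is that awareness is closed under finite conjunction of aware primitives, $\bigwedge_{p \in \P(\phi)} Ap \to A\big(\bigwedge_{p \in \P(\phi)} p\big)$, and I expect securing this closure (and, with it, the consistency step above) to be the main obstacle. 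Granting the characterization, $A\phi \notin \Gamma$ forces some $p \in \P(\phi)$ with $Ap \notin \Gamma$, i.e.\ $p \notin \Q'$, so $\phi \notin \L^{A,K}(\Q')$ and $(\Gamma',\Q')$ is the required successor; the remaining care is purely the bookkeeping that every set I extend indeed lives in the relativized language $\L^{A,K}(\Q')$.
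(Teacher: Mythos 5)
Your construction is, step for step, the paper's own: take $\Q' = \{p \mid Ap \in \Gamma\}$, set $\Gamma^- = \{\psi \mid K\psi \in \Gamma\}$, use \ax{AGP} to put the awareness set inside $\L^{A,K}(\Q')$, \ax{A0} to put $\Gamma^-$ there as well, \ax{D} together with \ax{K} and \ax{Nec$_{AK}$} for consistency of $\Gamma^-$, and a Lindenbaum extension carried out inside $\L^{A,K}(\Q')$. So in structure there is nothing to add.

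The one point of divergence is that you explicitly isolate, and leave open, the aggregation step $\bigwedge_{p\in\P(\phi)} Ap \rightarrow A\phi$, which is what you need to conclude from $A\phi\notin\Gamma$ that some primitive of $\phi$ is missing from $\Q'$. Be aware that the paper does not discharge this step either: it asserts that \ax{AGP} forces $\{\psi \mid A\psi\in\Gamma\} = \L^{A,K}(\Q')$ for some $\Q'$, and the inclusion $\supseteq$ in that equality is exactly your aggregation property. Your instinct that this is the crux is correct. As literally stated, \ax{AGP} yields only the downward direction ($A\phi\to A\psi$ whenever $\P(\psi)\subseteq\P(\phi)$); to recover $A\phi$ from the individual $Ap$'s one needs some $\theta$ with $A\theta\in\Gamma$ and $\P(\phi)\subseteq\P(\theta)$, and none of the remaining axioms of $\ax{AX}$ obviously supplies one. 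The schema \emph{is} valid in the intended semantics, since $\<M,\w\>\models A\phi$ holds iff every $\r$-accessible world's language contains all of $\P(\phi)$, which is precisely the conjunction of the conditions for the $Ap$'s. So either \ax{AGP} should be read with biconditional strength, $A\phi \leftrightarrow \bigwedge_{p\in\P(\phi)} Ap$ (as in Fagin--Halpern's axiomatization of awareness generated by primitive propositions), or the aggregation direction must be added; under that reading your argument, like the paper's, goes through. Everything else in your proposal (the use of \ax{AGP} for the accessibility clause, \ax{A0} for $\Gamma^-\subseteq\L^{A,K}(\Q')$, the consistency bookkeeping) matches the paper and is correct.
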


\begin{subproof}{$\bigstar$}
Since $\Gamma$ contains \ax{AGP}, we have that $\{\psi \mid A\psi \in \Gamma\} = \L^{A,K}(\Q')$ for some $\Q' 
\subset \P$. Since $A\phi \notin \Gamma$, $\phi \notin \L^{A,K}(\Q')$. Set $\Gamma^- = \{\psi \mid K\psi \in \Gamma\}$. By \ax{A0}, $\Gamma^- \subseteq \L^{A,K}(\Q')$, and by \ax{D}, $\Gamma^- \neq \L^{A,K}(\Q')$. Notice also that, by \ax{Nec$_{AK}$}, we have that $\Gamma^-$ contains all tautologies in $\L^{A,K}(\Q)$.  This allows for the standard argument that $\Gamma^-$ is a consistent set of formulas and can therefore be extended to a maximally consistent set, $\Gamma' \subset \L^{A,K}(\Q')$. $(\Gamma',\Q')$ is the desired world. 
\end{subproof}

\begin{lem}
\label{lem:kextend}
Fix $(\Gamma,\Q)$. If $K\phi \notin \Gamma$ then there exists a $(\Gamma',\Q')$ such that $(\Gamma,\Q)\r (\Gamma',\Q')$ and $\phi \notin \Gamma'$. 
\end{lem}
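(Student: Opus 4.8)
The plan is to run the standard existence (``diamond'') lemma for the $K$-modality, split into the two regimes distinguished by whether $\phi$ is aware at $(\Gamma,\Q)$. Write $\Gamma^K = \{\psi \mid K\psi \in \Gamma\}$ and $\Gamma^A = \{\psi \mid A\psi \in \Gamma\}$. Exactly as in the proof of Lemma~\ref{lem:aextend}, axiom \ax{AGP} forces $\Gamma^A = \L^{A,K}(\Q')$ for some $\Q' \subseteq \P$, and I will take this $\Q'$ as the second coordinate of the world I build; the $A$-clause of $\r^c$ is then met automatically, since it asks only that $\Gamma^A \subseteq \L^{A,K}(\Q')$.

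First I would dispose of the case $A\phi \notin \Gamma$. Here Lemma~\ref{lem:aextend} already delivers an accessible $(\Gamma',\Q')$ with $\phi \notin \L^{A,K}(\Q')$; since $\Gamma' \subseteq \L^{A,K}(\Q')$ this gives $\phi \notin \Gamma'$ for free, and we are done. So the real content is the case $A\phi \in \Gamma$, in which $\phi \in \L^{A,K}(\Q')$ and I must produce an accessible world that actively refutes $\phi$. For this I would show $\Gamma^K \cup \{\neg\phi\}$ is consistent, extend it by Lindenbaum to a maximally consistent $\Gamma' \subseteq \L^{A,K}(\Q')$, and check the two accessibility clauses: $\Gamma^K \subseteq \Gamma'$ holds by construction, the $A$-clause holds as above, and $\phi \notin \Gamma'$ because $\neg\phi \in \Gamma'$.

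The consistency of $\Gamma^K \cup \{\neg\phi\}$ is where the awareness apparatus does its work. Suppose it fails; then $\Gamma^K \vdash \phi$, so $\vdash (\psi_1 \land \cdots \land \psi_n) \rightarrow \phi$ for finitely many $\psi_i \in \Gamma^K$, and I aim to contradict $K\phi \notin \Gamma$ by deriving $K\phi \in \Gamma$. The single nonclassical step is that necessitation is only available in the weakened form \ax{Nec$_{AK}$}, which produces $A\chi \rightarrow K\chi$ rather than $K\chi$. The point is that every formula I need to necessitate is in fact aware in $\Gamma$: from $K\psi_i \in \Gamma$ and \ax{A0} I get $A\psi_i \in \Gamma$, hence $\psi_i \in \Gamma^A = \L^{A,K}(\Q')$, and with $\phi \in \L^{A,K}(\Q')$ the implication $(\psi_1 \land \cdots \land \psi_n) \rightarrow \phi$ also lies in $\L^{A,K}(\Q')$, so its $A$-form belongs to $\Gamma$. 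Discharging that $A$-antecedent by \ax{MP} converts \ax{Nec$_{AK}$} into genuine necessitation here, and then \ax{K} (distribution) together with the facts $K\psi_i \in \Gamma$ delivers $K\phi \in \Gamma$, the desired contradiction.

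I expect the main obstacle to be precisely this upgrade of \ax{Nec$_{AK}$} to full necessitation: one must check that \emph{every} auxiliary formula generated along the way---the conjunction $\psi_1 \land \cdots \land \psi_n$, and the implications used in the distribution and monotonicity steps for $K$---also sits inside $\L^{A,K}(\Q')$, so that \ax{A0} and \ax{AGP} keep the corresponding $A$-formulas in $\Gamma$ at each use. Once this bookkeeping is in place the argument collapses to the familiar normal-modal existence lemma, and the remainder (the Lindenbaum extension and the verification of the two clauses defining $\r^c$) is routine.
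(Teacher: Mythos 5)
Your proposal is correct and follows the same route as the paper's proof: the same case split on whether $A\phi \in \Gamma$, the same appeal to Lemma~\ref{lem:aextend} in the unaware case, and in the aware case the same extension of $\{\neg\phi\} \cup \{\psi \mid K\psi \in \Gamma\}$ to a maximally consistent set inside $\L^{A,K}(\Q')$ with $\Q'$ determined by \ax{AGP}. The only difference is that you explicitly carry out the consistency argument---using \ax{A0} and \ax{AGP} to show every formula needing necessitation lies in $\L^{A,K}(\Q')$, so that \ax{Nec$_{AK}$} can be discharged to genuine necessitation and \ax{K} yields the contradiction $K\phi \in \Gamma$---a step the paper compresses into ``as usual.''
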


\begin{subproof}{$\bigstar\blacksquare$}
There are two cases to consider. First, if $A\phi \notin \Gamma$, then by Lemma \ref{lem:aextend}, there is an assessable world, $(\Gamma',\Q')$, such that $\phi \notin \L^{A,K}(\Q')$, and hence clearly, $\phi \notin \Gamma'$.

So assume that $A\phi \in \Gamma$. Since $\Gamma$ contains \ax{AGP}, we have that $\{\psi \mid A\psi \in \Gamma\} = \L^{A,K}(\Q')$ for some $\Q' \subseteq \P$. Then consider the set $\Gamma^- =  \{\neg\phi\} \cup \{\psi \mid K\psi \in \Gamma\}$. Since $\Gamma$ contains \ax{A0}, we have that $\Gamma^- \subseteq \L^{A,K}(\Q')$. As usual, $\Gamma^-$ can be extended to a maximally consistent set, $\Gamma'$ in $\L^{A,K}(\Q')$. Again, $(\Gamma',\Q')$ is the desired world. 
\end{subproof}
\end{subproof}

\section{Other Proofs and Lemmas}

\begin{proof}[Proof of Lemma \ref{lem:wo}]
Parts (i-iii) are immediate from definitions. 
\begin{enumerate}
\item[(iv)] $(X \land 1_Y)\lor 1 = 1_X \land ((Y \lor 1)\lor 1) = 1_X \land (Y \lor 1) = 1_X \land 1_Y = 1_Y$. 
\item[(v)]  $1_X \land 1_Y = (X \lor 1) \land (Y \lor 1 ) = (X \land Y) \lor 1 = 1_{X\land Y}$. Further $1_X \lor 1_Y = (X \lor 1) \lor (Y \lor 1 ) = (X \lor Y) \lor 1 = \neg(\neg X \land \neg Y) \lor 1 = (\neg X \land \neg Y) \lor 1 = (\neg X \lor 1) \land (\neg Y \lor 1) = (X \lor 1) \land (Y \lor 1) = 1_X \land 1_Y$, where the elimination of negations comes from the fact that $X \lor 1 = X \lor \neg X = \neg X \lor 1$ via commutativity and \eqref{rb3}.
\end{enumerate}
\end{proof}

\begin{lem}
\label{lem:ultrafilterextension}
Let $F \in \F(\RB)$ be strongly proper and let $X \notin F$. Then $F$ can be extended to $u \in F^\RB$ such that $\pi_2(F) = \pi_2(u)$ and $X \notin u$. 
\end{lem}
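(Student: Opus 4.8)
The plan is to construct $u$ as a maximal element of
\[
\mathcal{P} = \{\, G \in \F(\RB) : F \subseteq G,\ G \text{ strongly proper},\ \pi_2(G) = \pi_2(F),\ X \notin G \,\},
\]
ordered by inclusion, and then to show that maximality alone forces $u \in F^\RB$. First I would note that $\mathcal{P}\neq\varnothing$ since $F\in\mathcal P$, and that the union of a chain in $\mathcal{P}$ again lies in $\mathcal{P}$: a union of a chain of filters is a filter, while strong properness, the constraint $\pi_2(\cdot)=\pi_2(F)$, and the exclusion of $X$ are each preserved because they are conditions checked on single elements. Zorn's Lemma then yields a maximal $u\in\mathcal P$, and this $u$ automatically satisfies every requirement of the statement except membership in $F^\RB$.

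So the work is to prove $u\in F^\RB$. Fix $1_W\in\pi_2(u)$. By Theorem \ref{thm:relative} the set $\pi_1(1_W)$ is a Boolean algebra, and $u\cap\pi_1(1_W)$ is a filter of it that is proper because $u$ is strongly proper and hence omits the bottom element $0_{1_W}=1_W\land\neg 1_W$. It remains to see that this filter is an \emph{ultra}filter, i.e.\ that for each $Z\in\pi_1(1_W)$ one of $Z,\neg Z$ lies in $u$. Suppose not, so $Z,\neg Z\notin u$ for some such $Z$. The crucial preliminary point is that adjoining either $Z$ or $\neg Z$ leaves $\pi_2$ unchanged: every element of the generated filter $u_Z$ lies above some $v\land Z$ with $v\in u$, and by Lemma \ref{lem:wo} we have $1_{v\land Z}=1_v\land 1_W$, which already belongs to the meet-closed, upward-closed set $\pi_2(u)$; hence any $1_Y\in\pi_2(u_Z)$ satisfies $1_Y\ge 1_v\land 1_W\in u$ and is therefore already in $\pi_2(u)$ (and symmetrically for $\neg Z$). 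By maximality, then, each of $u_Z$ and $u_{\neg Z}$ must violate \emph{either} strong properness \emph{or} the exclusion of $X$.

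The heart of the argument, and the step I expect to be the main obstacle, is to rule out all four resulting combinations. The difficulty is genuinely cross-domain: since joins contract domains, an element adjoined at level $1_W$ influences, through the projection maps, all strictly smaller domains, and it is there that strong properness or the exclusion of $X$ might fail. I would neutralise this by collapsing everything into one Boolean algebra. Choosing witnesses $a,b\in u$ for the two failures, put $c=a\land b\in u$ and $D=1_c\land 1_W\in\pi_2(u)$, and apply the homomorphism $h_D\colon Y\mapsto Y\land D$ from Theorem \ref{thm:relative}, which preserves $\land,\lor,\neg$ and bottom elements, to land in $\pi_1(D)$. There the relations $Z\lor\neg Z=1_W$ and $Z\land\neg Z=0_W$ become $h_D(Z)\lor\neg h_D(Z)=1_D$ and $h_D(Z)\land\neg h_D(Z)=0_D$, and each failure becomes an inequality between $h_D(a)$ or $h_D(b)$ and $h_D(Z)$ or $h_D(\neg Z)$. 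Using distributivity to combine the two, every combination forces the single element $h_D(c)=c\land 1_W\in u$ to be either $\le h_D(X)\le X$, whence $X\ge c\land 1_W$ gives $X\in u$, or equal to $0_D=(c\land 1_W)\land\neg(c\land 1_W)$, whence $u$ contains an element of the form $Y\land\neg Y$. Both contradict $u\in\mathcal P$.

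Since no such $Z$ can exist, $u\cap\pi_1(1_W)$ is an ultrafilter of $\pi_1(1_W)$ for every $1_W\in\pi_2(u)$, which is exactly the condition $u\in F^\RB$. Together with $F\subseteq u$, $\pi_2(u)=\pi_2(F)$, and $X\notin u$, this proves the lemma. I anticipate that the only real care lies in the bookkeeping of the third paragraph; once everything is pushed through $h_D$ into $\pi_1(D)$, the four cases reduce to elementary Boolean lattice identities, which finish the job.
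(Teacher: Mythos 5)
Your proof is correct, and it is organised differently from the paper's, although both arguments ultimately run on the same engine: Zorn's lemma plus a reduction of the crucial computation to a single local Boolean algebra via the projection homomorphism $h_D$ of Theorem \ref{thm:relative}. The paper proceeds in the opposite order: it first proves a one-step extension lemma --- if $1_Y \in \pi_2(F)$ and $\neg Y \notin F$ then the filter generated by $F \cup \{Y\}$, namely $\{Z \land V \mid Z \geq Y,\ V \in F\}$, remains strongly proper, has the same $\pi_2$, and still omits $\neg Y$ --- establishing strong properness by pushing a purported witness $Z \land V = 0_W$ into $\pi_1(1_V \land 1_Y)$ and reading off $\neg Y \in F$; it then delegates the rest to ``the usual Zornesque arguments.'' You instead maximise first, over the poset of strongly proper extensions with fixed $\pi_2$ omitting $X$, and then show that a maximal element must be locally prime, combining the two failure witnesses for $u_Z$ and $u_{\neg Z}$ into the single element $h_D(c)$ and splitting into four cases. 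The payoff of your arrangement is that it makes fully explicit the one point the paper glosses over: why the designated excluded element $X$ survives the maximisation, i.e.\ why one cannot have \emph{both} $u_Z$ and $u_{\neg Z}$ fail (your cases force either $X \geq c \land D \in u$ or $0_D \in u$, each contradicting $u \in \mathcal{P}$); the cost is that the single-element extension step, which the paper isolates as a reusable and more modular fact, is folded into the contradiction. Your observation that $\pi_2(u_Z) = \pi_2(u)$ via $1_{v \land Z} = 1_v \land 1_W$ and upward closure is exactly the content of the corresponding step in the paper's proof, so nothing essential is missing.
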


\begin{proof}[Proof of Lemma \ref{lem:ultrafilterextension}]
We will show that if $F \in \F(\RB)$ is strongly proper, then for all $1_X \in \pi_2(F)$, 
$\neg X \notin F$ then $F' = \{Z \land Y \mid Z \geq X, Y \in F\}$ is in $\F(\RB)$ and is strongly proper and $\pi_2(F) = \pi_2(F')$ and $\neg X 
\notin F'$. This suffices, since we can then appeal to the usual Zornesque arguments, to choose a maximal element of the partial order of all extensions. 

That $F'$ is upwards closed, contains $1$, and is closed under intersections is immediate. 
Thus, we need only show that $F'$ is strongly proper. Assume to the contrary that $Z \land Y = 0_W$ for some $W \in RB$ with $Z \geq X$ and $Y \in F$. Then $\{1_Z,1_Y\} \in F$ and so to is $1_V = 1_Y \land 1_X$ and $Y \land 1_V$. Since $Z \land Y = 0_W$ we have also that\footnote{That $0_W \land 1_V = 0_V$ follows from the fact that $h_V$ given by Lemma \ref{thm:relative} is a homomorphism.} 
\begin{align}
\label{eq:its0}
(Z \land 1_V ) \land (Y \land 1_V) = 0_W \land 1_V = 0_V.
\end{align}
Now, $(Z \land 1_V ) \geq (X \land 1_V)$, the fact that $\pi_1(V) \in \mathsf{BA}$ and \eqref{eq:its0} requires that $\neg (X \land 1_V) \geq (Y \land 1_V)$. But, since $F$ was upwards closed, this requires that $\neg X \in F$, since $\neg X \geq \neg X \land 1_V = \neg (X \land 1_V)$ (by \eqref{rb5}). This contradicts our assumption.

Clearly, $\pi_2(F) \subseteq \pi_2(F')$, so to see the other direction, let $Z \land Y = 1_W$ for some $W \in RB$ with $Z \geq X$ and $Y \in F$. Then by Lemma  \ref{lem:wo}\ref{wo:lat}, $1_W = 1_Z \land 1_Y \geq 1_X \land 1_Y$ and so $1_W \in \pi_2(F)$. 
\end{proof}

\newpage

\bibliographystyle{../plainnatnourl.bst}
\singlespace
\bibliography{../AS.bib}

\end{document}